\documentclass[%
 reprint,
superscriptaddress,
amsmath,amssymb,
aps,
pre,
]{revtex4-1}

\usepackage{amsthm}
\usepackage{graphicx}
\usepackage{dcolumn}
\usepackage{bm}
\usepackage{xcolor}
\definecolor{link-color}{cmyk}{0.8 ,  0.3 ,  0. , 0}
\usepackage[colorlinks=true,%
            linkcolor   = {link-color},%
            citecolor   = {link-color},%
            urlcolor    = {link-color}]{hyperref}


\setcitestyle{numbers}

\newcommand{\Network}{G}
\newcommand{\Neighbors}{\mathcal{N}}
\newcommand{\Nodes}{V}
\newcommand{\Edges}{E}
\newcommand{\Arcs}{\mathcal{A}}
\newcommand{\edge}[2]{(#1,#2)}

\newcommand{\Adjacency}{A}

\renewcommand{\L}{\mathcal{L}}
\newcommand{\T}{{\!\top}}
\newcommand{\Q}{Q}

\DeclareMathOperator{\diag}{diag}
\DeclareMathOperator{\sgn}{sgn}

\newtheorem{theorem}{Theorem}
\newtheorem{definition}{Definition}
\newtheorem{lemma}{Lemma}

\graphicspath{ {../figures/} }

\begin{document}

\title{Partitioning signed networks}
\thanks{To appear as a chapter in ``Advances in Network Clustering and Blockmodeling'' edited by P. Doreian, V. Batagelj \& A. Ferligoj (Wiley, New York, 2018)}%

\author{Vincent Traag}
\email{v.a.traag@cwts.leidenuniv.nl}
\affiliation{Centre for Science and Technology Studies (CWTS), Leiden University}
\author{Patrick Doreian}
\affiliation{Faculty of Social Sciences, University of Ljubljana}
\affiliation{Department of Sociology, University of Pittsburgh}
\author{Andrej Mrvar}
\affiliation{Faculty of Social Sciences, University of Ljubljana}

\date{\today}

\begin{abstract}
\noindent Signed networks appear naturally in contexts where conflict or animosity is apparent.
In this book chapter we review some of the literature on signed networks, especially in the context of partitioning.
Most of the work is founded in what is known as structural balance theory.
We cover the basic mathematical principles of structural balance theory.
The theory yields a natural formulation for partitioning.
We briefly compare this to other partitioning approaches based on community detection.
Finally, we analyse an international network of alliances and conflicts and discuss the implications of our findings for structural balance theory.
\end{abstract}

\maketitle

\noindent We are concerned with signed networks, where each link is associated with either a positive ($+$) or negative sign ($-$).
More generally, weights $w_{ij}$ could be used. Although weights are often assumed to be positive, we explicitly allow them also to be negative.
For simplicity, we deal primarily with non-weighted networks, but most concepts used here can be adapted easily to the weighted case.

\section{Notation}

\noindent While we try to be as consistent as possible with the general notation used throughout this book,
we require some additional notation because signed networks have signs for arcs and edges.
We denote a directed signed network by $\Network=(\Nodes,\Arcs^-,\Arcs^+)$ where $\Arcs^-\subseteq \Nodes \times \Nodes$ are the negative links and $\Arcs^+ \subseteq \Nodes \times \Nodes$ the positive links. 
We assume that $\Arcs^- \cap \Arcs^+ = \emptyset$, so that no link is both positive and negative.
We exclude loops on nodes.
Many studied signed networks are directed.
Some are not, including the network we study here.
Similarly, an undirected signed network is denoted by $\Network=(\Nodes,\Edges^-,\Edges^+)$ where $\Edges^-\subseteq \Nodes \times \Nodes$ are the negative links and $\Edges^+ \subseteq \Nodes \times \Nodes$ the positive links.
As for the directed case, $\Edges^- \cap \Edges^+ = \emptyset$.

We present our initial discussion in terms of directed signed networks. However, if we restrict ourselves to undirected graphs, then $\edge{i}{j} \in \Edges^\pm$ is identical to $\edge{j}{i} \in \Edges^\pm$.
Also, we assume that there are no self-loops, i.e. no $\edge{i}{i}$ exists. 
For edges, the signs on them are symmetrical by definition.

We define the adjacency matrices $\Adjacency^{+}$ and $\Adjacency^{-}$. 
We set $\Adjacency^+_{ij} = 1$ whenever $\edge{i}{j} \in \Arcs^+$ and $\Adjacency^+_{ij} = 0$ otherwise.
Similarly, $\Adjacency^-_{ij} = 1$ whenever $\edge{i}{j} \in \Arcs^-$ and $\Adjacency^-_{ij} = 0$ otherwise.
We denote the \emph{signed} adjacency matrix $\Adjacency = \Adjacency^{+} - \Adjacency^{-}$.
This can be summarised as follows
\begin{equation}
  \Adjacency_{ij} = \begin{cases}
      -1 & \text{if~} (i,j) \in \Arcs^-, \\
       1 & \text{if~} (i,j) \in \Arcs^+, \\
       0 & \text{otherwise}.
    \end{cases}
\end{equation}
Note that we exclusively work with the \emph{signed} adjacency matrix in this chapter, and $\Adjacency$ should not be confused with the ordinary adjacency matrix.
The signed adjacency matrix for undirected networks is defined in a similar fashion.
For undirected networks the signed adjacency matrix is symmetric, and $\Adjacency = \Adjacency^\T$.

The neighbors of a node are those nodes to which it is connected.
The positive neighbors are $\Neighbors^+_v = \{ u \mid \edge{v}{u} \in \Edges^+\}$ and the negative neighbors similarly $\Neighbors^-_v = \{ u \mid \edge{v}{u} \in \Edges^-\}$, and all the neighbors are simply the union of both $\Neighbors(v) = \Neighbors^+(v) \cap \Neighbors^-(v)$.
The number of edges connected to a node is its degree. We distinguish between the positive degree $d^+_v = |\Neighbors^+_v|$, negative degree $d^-_v = |\Neighbors^-_v|$ and total degree $d_v = |N_v| =  d^+_v + d^-_v$. Similar formulations are possible for directed signed networks.

Blockmodeling, as a way of partitioning social networks, started with a clear
\emph{substantive} rationale expressed in terms of social roles~\cite{lorrain1971}. 
However, the availability of algorithms for partitioning (unsigned) networks~\cite{concor,burt1}, based on ideas of structural equivalence, led to a rather mechanical application to simply partition social networks with a subsequent {\it ad hoc} interpretation of what was identified. Such algorithms are indirect in the sense of having networks transformed to (dis)similarity measures for which partitioning methods are used. In contrast, a direct approach was proposed~\cite{Doreian2005} in which the network data are clustered directly. This allows the inclusion of substantive ideas within the rubric of pre-specification.

Consistent with this, the approach known as structural balance theory has a clear substantive foundation. We briefly review the basics of balance theory as it connects directly to partitioning signed social networks.
We then review some methods for partitioning networks in practice, and examine how they connect to balance theory.
Finally, we briefly explore how structural balance evolves through time in an empirical example of international alliances and conflict.

\section{Structural balance theory}

\noindent The basis of structural balance theory is founded on considerations of cognitive dissonance. 
\citet{Heider1946} focused on so-called p-o-x triplets, considering the relations between an actor (p), another actor (o) and some object (x) and claimed such triplets tend to be consistent in attitudes. 
For example, in this perspective, if someone (p) has a friend (o) who dislikes conservative philosophies (x), then p also tends to dislike conservative philosophies. 
This extends naturally to p-o-q triples for three actors denoted by p, o, and q. 
In the formulation involving three actors, well-known claims such as ``a friend of a friend is a friend'';  ``an enemy of a friend is an enemy'';  ``a friend of an enemy is an enemy''; and ``an enemy of an enemy is a friend'' are thought to hold. 
The notion of balance from \citet{Heider1946} was further formalized, and extended to an arbitrary number of persons or objects by~\citet{Cartwright1956}.
They modeled relations between persons as a graph where nodes are persons and the relations between them links in the graph. The four possible triads for the undirected case are shown in Figure~\ref{fig:struct_balance}.

\begin{figure}[t]
  \begin{center}
    \includegraphics{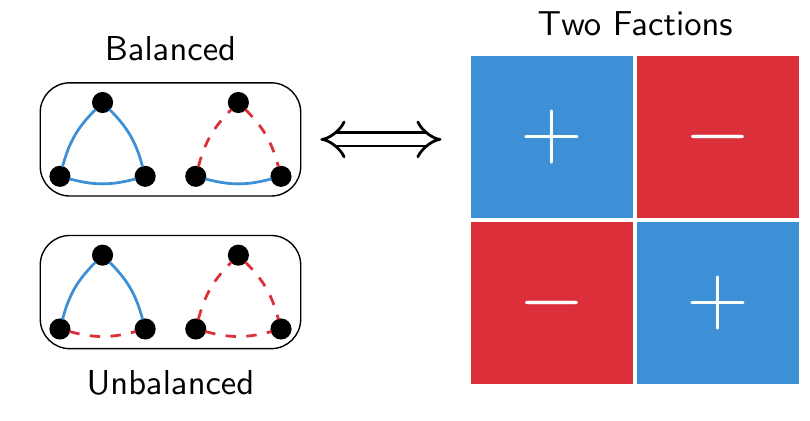}
  \end{center}
  \caption{\textbf{Structural balance.} 
    There are four possible configurations for having positive or negative links between three nodes (a triad). 
    These are demonstrated on the left, where a solid line represents a positive link and a dashed line represents a negative link. 
    The upper two triads are structurally balanced because the product of their signs is positive.
    Similarly, the lower two triads are not structurally balanced because the product of their signs is negative.
    If all triads (in a complete network) are structurally balanced, the network can be partitioned in two factions such that they are internally positively linked, with negative links between the two factions, as illustrated on the right.
  }
  \label{fig:struct_balance}
\end{figure}

For the remainder of the chapter, we restrict ourselves to undirected graphs.
We first focus on complete graphs, where all links are present (excluding self-loops).
Following~\cite{Cartwright1956}, we provide the following definition.

\begin{definition}
  A triad\index{triad}{} $i,j,k$ is called balanced whenever the product
  \begin{equation}
    \Adjacency_{ij}\Adjacency_{jk}\Adjacency_{ki} = 1.
    \label{equ:balanced_triad}
  \end{equation}
  A complete signed graph $\Network$ is structurally balanced whenever all triads are balanced.
\end{definition}

Of the four possible triads, two are balanced ($+++$ and $+--$) and two are unbalanced ($++-$ and $---$) according to this definition (see also Fig.~\ref{fig:struct_balance}).

\citet{Harary1953} proved that if the graph $\Network$ is structurally balanced, then it can be partitioned in two clusters such that there are only positive links within each cluster and negative links between them.
\citet{Cartwright1956} called this observation the structure theorem, and \citet{Doreian1996a} called it the \emph{first} structure theorem . 
\begin{theorem}[Structure theorem, \cite{Harary1953}]
  Let $\Network = (\Nodes,\Edges^+,\Edges^-)$ be a complete signed graph. If and only if $\Network$ is balanced can $\Nodes$ be partitioned into two disjoint subsets $\Nodes_1$ and $\Nodes_2$ such that a positive edge $e \in \Edges^+$ either in $\Nodes_1 \times \Nodes_1$ or $\Nodes_2 \times \Nodes_2$ while a negative edge $e \in \Edges^-$ falls in $\Nodes_1 \times \Nodes_2$.
\end{theorem}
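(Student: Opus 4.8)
The plan is to prove the two implications separately, since the statement is an \emph{if and only if}. The converse direction — that a partition of the claimed form forces balance — is the routine one, so I would dispose of it first by a short case analysis. Given a partition $\Nodes = \Nodes_1 \cup \Nodes_2$ with positive edges lying inside $\Nodes_1$ or $\Nodes_2$ and negative edges running between the blocks, I would take an arbitrary triad $i,j,k$ and distribute its three vertices. Either all three lie in one block, in which case $\Adjacency_{ij}\Adjacency_{jk}\Adjacency_{ki} = (+1)^3 = 1$, or exactly two share a block and the third is separated, in which case the triad carries one positive and two negative edges and the product is $(+1)(-1)(-1) = 1$. Since every triad is then balanced, $\Network$ is balanced by the definition above.

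The substantive direction is the forward one: from balance I must \emph{construct} the partition. First I would fix an arbitrary vertex $v \in \Nodes$ and set $\Nodes_1 = \{v\} \cup \Neighbors^+_v$ and $\Nodes_2 = \Neighbors^-_v$. Here completeness does the essential work: every vertex other than $v$ is joined to $v$ by exactly one signed edge, so $\Nodes_1$ and $\Nodes_2$ are disjoint and exhaust $\Nodes$. It then remains only to check the three required sign conditions, and for each I would invoke balance of the triad through the seed vertex $v$.

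Concretely, I would argue as follows. For two vertices $i,j \in \Nodes_1 \setminus \{v\}$ we have $\Adjacency_{vi} = \Adjacency_{vj} = 1$, so balance of the triad $v,i,j$ gives $\Adjacency_{vi}\Adjacency_{ij}\Adjacency_{jv} = \Adjacency_{ij} = 1$, a positive internal edge; the edges from $v$ into $\Nodes_1$ are positive by construction. For $i,j \in \Nodes_2$ we have $\Adjacency_{vi} = \Adjacency_{vj} = -1$, and $(-1)\Adjacency_{ij}(-1) = 1$ again yields $\Adjacency_{ij} = 1$. Finally, for $i \in \Nodes_1 \setminus \{v\}$ and $j \in \Nodes_2$ we have $\Adjacency_{vi} = 1$, $\Adjacency_{vj} = -1$, so $(+1)\Adjacency_{ij}(-1) = 1$ forces $\Adjacency_{ij} = -1$, a negative crossing edge. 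This establishes exactly the claimed structure.

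The main obstacle — really the point where the hypotheses must be used with care — is completeness. The whole construction hinges on $v$ being adjacent to \emph{every} other vertex, so that the positive and negative neighborhoods of a single seed already tile $\Nodes$ and every sign can be read off from a triad containing $v$. Without completeness one cannot reach distant vertices through a single triad, and the two-block coloring would instead have to be propagated along paths and checked for consistency around cycles, which is precisely the more delicate argument required in the general connected case. I would also flag the harmless degenerate possibility: if $v$ has no negative neighbors then $\Nodes_2 = \emptyset$, the negative-edge condition holds vacuously, and the partition into two blocks — one possibly empty — remains valid.
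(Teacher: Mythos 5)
Your proof is correct and follows essentially the same route as the paper's: seed a vertex $v$, take $\Nodes_1 = \{v\} \cup \Neighbors^+_v$ and $\Nodes_2$ its complement, and read off every remaining sign from the balanced triad through $v$, with completeness guaranteeing the two neighborhoods tile $\Nodes$. If anything you are slightly more thorough — you explicitly verify that edges crossing between $\Nodes_1$ and $\Nodes_2$ are negative and spell out the triad case analysis for the converse, both of which the paper leaves implicit.
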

\begin{proof}
	Assume $\Network$ is balanced.
	Consider some node $v \in \Nodes$ and set $\Nodes_1 = v \cup \Neighbors^+(v)$ as well as the set $\Nodes_2 = \Nodes \setminus \Nodes_1$.
	Consider an edge $\edge{u}{w} \in \Nodes_2 \times \Nodes_2$. 
  Then $\edge{u}{v} \in \Edges^-$ and $\edge{w}{v} \in \Edges^-$ by definition of $\Nodes_2$ so that $\edge{u}{w} \in \Edges^+$ by structural balance.
  Hence all edges in $\Nodes_2$ are positive.
  Similarly, any edge $\edge{u}{w} \in \Nodes_1 \times \Nodes_1$ is positive.
	Hence, we can partition $\Nodes$ into the stated disjoint sets $\Nodes_1$ and $\Nodes_2$.
	In reverse, any triad is easily seen to be balanced if $\Nodes$ is partitioned as stated in the theorem.
\end{proof}

While the above is limited to complete graphs, it can be generalized to incomplete graphs.
For this we first need to introduce another definition for structural balance:

\begin{definition}[Structural Balance]
  Let $\Network = (\Nodes,\Edges^+,\Edges^-)$ be a signed graph and $\Adjacency$ the signed adjacency matrix.
  Let $C = v_1v_2\ldots v_kv_1$ be a cycle consisting of nodes $v_i$ with $v_{k+1} = v_1$. 
	Then the cycle $C$
  is called balanced whenever 
  \begin{equation}
    \sgn(C) := \prod_{i=1}^k \Adjacency_{v_i v_{i+1}} = 1.
  \end{equation}
	
  A signed graph $\Network$ is called balanced if all its cycles $C$ are balanced.
\end{definition}

Stated differently, $\sgn(C)$  is the sign of the cycle which is balanced if its sign is positive.
If a cycle contains $m^-$ negative edges, then $\sgn(C) = (-1)^{m^-}$.
In other words, a cycle is balanced if it contains an even number of negative links.
Note that for a cycle of length three, this coincides exactly with the definition of a balanced triad.

The sign of a cycle can be decomposed in the sign of subcycles if the cycle has a \emph{chord}: an edge between two nodes of the cycle (see Fig.~\ref{fig:sb_cycles}).
\begin{theorem}
	Let $C = v_1v_2\ldots v_kv_1$ be a cycle with a chord between nodes $v_1$ and $v_r$ in $C$. 
	Then let $C_1 = v_1v_2\ldots v_rv_1$ and $C_2 = v_1v_k\ldots v_r v_1$ be the induced subcycles. 
	Then $\sgn(C) = \sgn(C_1)\sgn(C_2)$.
\end{theorem}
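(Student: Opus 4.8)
The plan is to expand the two subcycle signs as explicit products of signed adjacency entries and then track which edges they share. First I would write out
\begin{equation}
  \sgn(C_1) = \Adjacency_{v_1 v_2}\Adjacency_{v_2 v_3}\cdots\Adjacency_{v_{r-1}v_r}\,\Adjacency_{v_r v_1},
\end{equation}
where the final factor $\Adjacency_{v_r v_1}$ is the chord, and similarly
\begin{equation}
  \sgn(C_2) = \Adjacency_{v_1 v_k}\Adjacency_{v_k v_{k-1}}\cdots\Adjacency_{v_{r+1}v_r}\,\Adjacency_{v_r v_1},
\end{equation}
whose last factor is again the chord $\Adjacency_{v_r v_1}$.

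The key observation is that the chord is the only edge common to $C_1$ and $C_2$. When I multiply $\sgn(C_1)\sgn(C_2)$, the two chord factors combine into $\Adjacency_{v_r v_1}^2$. Because the chord is a genuine edge, $\Adjacency_{v_r v_1}\in\{-1,+1\}$, so $\Adjacency_{v_r v_1}^2 = 1$ and the chord contribution drops out entirely.

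It then remains to identify the surviving factors with the edges of $C$. The factors from $C_1$ minus the chord are exactly the edges along the arc $v_1 v_2 \cdots v_r$, while the factors from $C_2$ minus the chord are the edges along the complementary arc $v_r v_{r+1}\cdots v_k v_1$, read in reverse order. Here I would invoke that the graph is undirected, so $\Adjacency = \Adjacency^\T$ and each reversed factor $\Adjacency_{v_{i+1}v_i}$ equals $\Adjacency_{v_i v_{i+1}}$; this lets me rewrite the $C_2$ factors in the forward orientation used by $C$. Together the two arcs traverse every edge of $C$ exactly once, so the remaining product is precisely $\prod_{i=1}^k \Adjacency_{v_i v_{i+1}} = \sgn(C)$, yielding $\sgn(C_1)\sgn(C_2) = \sgn(C)$.

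The only real obstacle is bookkeeping: checking that the index ranges of the two arcs are disjoint and jointly exhaust the edges $v_1 v_2,\ldots,v_k v_1$ of $C$, and handling the reversed orientation of $C_2$ through the symmetry of $\Adjacency$. Once the edge sets are matched up, the whole argument reduces to the single cancellation $\Adjacency_{v_r v_1}^2 = 1$.
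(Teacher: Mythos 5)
Your proof is correct, but it takes a different route from the paper. The paper works entirely with counts of negative edges: writing $m^-_1$, $m^-_2$, $m^-$ for the numbers of negative links in $C_1$, $C_2$, $C$, it splits into two cases according to whether the chord $\edge{v_1}{v_r}$ is negative, and in each case verifies $(-1)^{m^-} = (-1)^{m^-_1}(-1)^{m^-_2}$. You instead expand $\sgn(C_1)\sgn(C_2)$ as an explicit product of adjacency entries and observe that the chord factor appears exactly twice, so $\Adjacency_{v_r v_1}^2 = 1$ and the surviving factors are precisely the edges of $C$. Your argument is arguably cleaner: it needs no case split on the sign of the chord, since the squaring handles both cases at once, and the only real work is the bookkeeping you correctly flag (the two arcs partition the edges of $C$, and the reversed orientation of the $C_2$ arc is absorbed by $\Adjacency = \Adjacency^\T$). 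What the paper's counting formulation buys is continuity with the subsequent, more general theorem on symmetric differences of cycles, whose proof is an inclusion--exclusion argument on $m^-$ that generalizes the same bookkeeping to an arbitrary shared edge set $T = C_1 \cap C_2$; your cancellation idea also generalizes (every shared edge contributes a square), but the paper's parity arithmetic is the template it reuses. Both proofs are complete and correct.
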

\begin{proof}
	We denote by $m^-_1$ the number of negative links of $C_1$ and similarly $m^-_2$ for $C_2$ and $m^-$ for $C$.
  Suppose that the link $\edge{v_1}{v_r}$ is not a negative link.
	Then the number of negative links in $C$ is $m^- = m^-_1 + m^-_2$ so that $\sgn(C) = (-1)^{m^-} = (-1)^{m^-_1}(-1)^{m^-_2} = \sgn(C_1) \sgn(C_2)$.
  Suppose that $\edge{v_1}{v_r}$ is a negative link.
	Then $m^- = (m^-_1 - 1) + (m^-_2 - 1)$ so that $\sgn(C) = (-1)^{m^-} = (-1)^{m^-_1}(-1)^{m^-_2}(-1)^{-2} = \sgn(C_1)\sgn(C_2)$.
\end{proof}
In other words, it is not necessary to determine the structural balance of all cycles, and we can restrict ourselves to the balance of chordless cycles.
In fact, this statement can be made stronger, and holds for any combination of cycles.
With a combination of cycles, we mean the symmetric difference of the edges of the two cycles.
To define this properly it is more convenient to denote a cycle by the set of its edges (in no particular order).
That is, we define a cycle $C = \{ e_1, e_2, \ldots, e_k \}$ where the edges form a cycle, i.e. the subgraph of $C$ is a cycle.

\begin{definition}
Let $C_1 = \{ e_1, e_2, \ldots, e_k \}$ and $C_2 = \{f_1, f_2, \ldots, f_k \}$
be two cycles. Then we define the \emph{symmetric difference} as
	\begin{align}
		C_1 \triangle C_2 = (C_1 \cup C_2) \setminus (C_1 \cap C_2).
	\end{align}
  We also refer to this as the \emph{combination} of two cycles.
\end{definition}
	
Note that the combination of two cycles may actually be a set of multiple edge-disjoint cycles.

Now we can prove the stronger statement on the combination of cycles.
\begin{theorem}
	Let $C_1 = \{ e_1, e_2, \ldots, e_k \}$ and $C_2 = \{f_1, f_2, \ldots, f_k \}$ be two cycles.
	If $C = C_1 \triangle C_2$ then $\sgn(C) = \sgn(C_1)\sgn(C_2)$.
\end{theorem}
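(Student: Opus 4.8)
The plan is to reduce everything to a parity count of negative edges, exploiting the identity $\sgn(C) = (-1)^{m^-}$ established above, where $m^-$ denotes the number of negative edges in a cycle (or, more generally, in any edge set read multiplicatively). First I would fix notation, writing $m^-_1$, $m^-_2$, $m^-_\cap$ and $m^-$ for the number of negative edges in $C_1$, $C_2$, $C_1 \cap C_2$ and $C = C_1 \triangle C_2$ respectively. Since the sign of an edge set is just the product of its edge signs, and each negative edge contributes a factor $-1$, the quantity $(-1)^{m^-}$ computes the sign regardless of whether $C$ is a single cycle or a disjoint union of several cycles. This observation is what lets the statement survive the fact, noted just above, that the combination of two cycles need not itself be a single cycle.

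The heart of the argument is then inclusion--exclusion on negative edges. Because $C_1 \triangle C_2 = (C_1 \cup C_2) \setminus (C_1 \cap C_2)$, I would count $m^- = m^-(C_1 \cup C_2) - m^-_\cap$, and then substitute $m^-(C_1 \cup C_2) = m^-_1 + m^-_2 - m^-_\cap$ to obtain $m^- = m^-_1 + m^-_2 - 2 m^-_\cap$. The crucial point is that the shared edges are removed entirely by the symmetric difference, so each shared negative edge, which would be double-counted in $m^-_1 + m^-_2$, disappears and produces the even correction $-2 m^-_\cap$.

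Finally I would exponentiate: $\sgn(C) = (-1)^{m^-} = (-1)^{m^-_1} (-1)^{m^-_2} (-1)^{-2 m^-_\cap} = \sgn(C_1)\sgn(C_2)$, since the even exponent $-2 m^-_\cap$ contributes a factor of $1$. I do not expect a genuine obstacle here; the argument is a clean parity count. The only point demanding care is the bookkeeping around the intersection: one must interpret $\sgn$ as a product over edges, so that the characterization $(-1)^{m^-}$ applies verbatim even when $C$ splits into several edge-disjoint cycles, and one must verify that the shared edges always contribute an \emph{even}, hence sign-neutral, correction. This mirrors the chord computation in the previous theorem, where a single shared negative edge already produced the harmless factor $(-1)^{-2}$.
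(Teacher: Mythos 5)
Your proposal is correct and follows essentially the same route as the paper: both arguments reduce the claim to the inclusion--exclusion identity $m^-(C) = m^-(C_1) + m^-(C_2) - 2\,m^-(C_1 \cap C_2)$ and conclude by the parity of the exponent in $(-1)^{m^-}$. Your explicit remark that $\sgn$ must be read as a product over edges so that the formula applies when $C_1 \triangle C_2$ splits into several edge-disjoint cycles is a point the paper leaves implicit, but it does not change the substance of the argument.
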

\begin{proof}
	Let us denote the number of negative links in a set $C$ by $m^-(C) = | C \cap E^- | $.
	Let $S = C_1 \cup C_2$ be the union of the two cycles, and $T = C_1 \cap C_2$ be the overlap of the two cycles.
	Then $m^-(S) + m^-(T) = m^-(C_1) + m^-(C_2)$, and $m^-(C) = m^-(S) - m^-(T) = m^-(C_1) + m^-(C_2) - 2 m^-(T)$.
	Hence
	\begin{align}
		\sgn(C) &= (-1)^{m^-(C)} \\
						&= (-1)^{m^-(C_1) + m^-(C_2) - 2 m^-(T)} \\
						&= (-1)^{m^-(C_1)} (-1)^{m^-(C_2)} (-1)^{2 m^-(T)} \\
						&= (-1)^{m^-(C_1)} (-1)^{m^-(C_2)} \\
						&= \sgn(C_1) \sgn(C_2)
	\end{align}
	since $(-1)^{2 m} = 1$ for any integer $m$.
\end{proof}

In other words, if we know the balance of some limited number of cycles, we can determine the balance of all cycles.
These `limited number of cycles' are called the \emph{fundamental cycles}. Any cycle can be obtained as a combination of two (or  more) fundamental cycles.
This implies that if all fundamental cycles are balanced, then the graph as a whole is balanced.
We do not consider fundamental cycles in more detail, but this notion underlies the technique by~\citet{Altafini2012} which we consider in section~\ref{sec:switching}.

\begin{figure}[t]
  \begin{center}
    \includegraphics{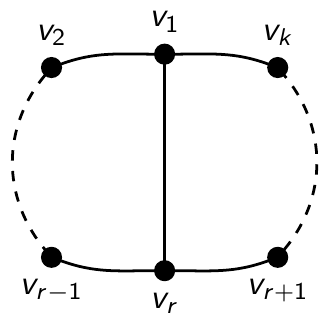}
  \end{center}
  \caption{\textbf{Chords and cycles.} 
  This illustrates a cycle $v_1 \ldots v_k$ with a chord between nodes $v_1$ and $v_r$. 
  There are two subcycles: one following the left path and the other following the right path in the illustration.
  These two subcycles have a single common edge: $v_1v_r$.
  The sign of the large cycle is then the product of the sign of the two subcycles.
  }
  \label{fig:sb_cycles}
\end{figure}

Similar to the sign of cycles, we can define the sign of a path.

\begin{definition}
 Let $P = v_1v_2\ldots v_k$ be a path in a signed graph $\Network$ with signed adjacency matrix $\Adjacency$. 
	The sign of the path $P$ is then defined as
  \begin{equation}
    \sgn(P) := \prod_{i=1}^{k-1} \Adjacency_{v_i v_{i+1}}.
  \end{equation}
\end{definition}

Paths in signed networks are either positive or negative.
A cycle can be decomposed in two paths so the sign of a cycle is the product of the sign of the two paths.
Hence, a cycle is balanced if the two paths have the same sign.

As before, the graph $\Network$ can be partitioned in two clusters with positive links within clusters and negative links between clusters.
\begin{theorem}[Structure theorem, \cite{Harary1953}]
  Let $\Network = (\Nodes,\Edges^+,\Edges^-)$ be a connected signed graph and $\Adjacency$ the signed adjacency matrix.
  Then $\Network$ is balanced if and only if $\Network$ can be partitioned into two disjoint subsets $\Nodes_1$ and $\Nodes_2$ such that a positive edge $e \in \Edges^+$ either in $\Nodes_1 \times \Nodes_1$ or $\Nodes_2 \times \Nodes_2$ while a negative edge $e \in \Edges^-$ falls in $\Nodes_1 \times \Nodes_2$.
  \label{thm:structure}
\end{theorem}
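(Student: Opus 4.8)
The plan is to prove both implications separately, leaning on the cycle-sign machinery already developed rather than on the completeness argument used in the first structure theorem. I would dispatch the easy direction first: that a partition of the stated form forces balance. Given such a partition into $\Nodes_1$ and $\Nodes_2$, every negative edge runs between the two parts while every positive edge stays within a part. Any cycle $C$ is a closed walk, so as one traverses it the walk must switch sides an even number of times; each switch uses exactly one negative edge, so $C$ contains an even number of negative edges and therefore $\sgn(C) = 1$. Since this holds for every cycle, $\Network$ is balanced.

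For the converse I would exploit connectedness by fixing a spanning tree. Pick any root $r \in \Nodes$. Because $\Network$ is connected, a spanning tree $T$ exists, and for each node $v$ there is a unique path $P_v$ in $T$ from $r$ to $v$. I would then define the partition by assigning $v$ to $\Nodes_1$ when $\sgn(P_v) = +1$ and to $\Nodes_2$ when $\sgn(P_v) = -1$ (placing $r$ in $\Nodes_1$, as its empty path has sign $+1$). Uniqueness of the tree path makes this labeling well-defined, so no consistency question arises at this stage.

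It then remains to check that every edge $\edge{u}{w}$ respects the claimed structure, i.e.\ that $\Adjacency_{uw} = +1$ exactly when $u$ and $w$ receive the same label. For a tree edge this is immediate: $P_w$ is $P_u$ extended by $\edge{u}{w}$, so $\sgn(P_w) = \sgn(P_u)\Adjacency_{uw}$, and the labels agree precisely when $\Adjacency_{uw} = +1$. For a non-tree edge $\edge{u}{w}$, the tree path joining $u$ to $w$ together with the edge $\edge{u}{w}$ forms a cycle $C$, and balance gives $\sgn(C) = 1$. The sign of the tree path from $u$ to $w$ equals $\sgn(P_u)\sgn(P_w)$, since the shared portion from $r$ down to the point where $P_u$ and $P_w$ meet contributes its sign twice and hence cancels, leaving only the two branches (one reversed) that make up the $u$--$w$ path. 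Thus $\sgn(P_u)\sgn(P_w)\Adjacency_{uw} = 1$, whence $\Adjacency_{uw} = \sgn(P_u)\sgn(P_w)$, so $\Adjacency_{uw} = +1$ iff $u$ and $w$ carry the same label.

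I expect the non-tree edge case to be the main obstacle, as it is the only place where the balance hypothesis does any real work and the only place requiring care: one must argue that the sign of the $u$--$w$ tree path factors as $\sgn(P_u)\sgn(P_w)$, using both that reversing a path leaves its sign unchanged and that traversing a shared sub-path twice squares its sign to $+1$. The spanning-tree construction is precisely what allows connectedness to substitute cleanly for the completeness assumption of the earlier theorem.
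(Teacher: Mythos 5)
Your proof is correct, and it takes a recognizably different route from the one in the chapter. The paper also builds the partition from the sign of a path to a fixed reference node $v$, but it uses \emph{arbitrary} paths (``$\Nodes_1$ is the set of nodes reachable from $v$ by a positive path'') and then rules out each forbidden edge configuration by exhibiting a negative closed walk through $v$; this implicitly relies on the fact that, under balance, all $u$--$v$ paths share the same sign (otherwise the partition is not even well defined), and the ``cycle'' it exhibits is really a closed walk that may repeat vertices. Your spanning-tree version sidesteps both issues: the labeling by $\sgn(P_v)$ is well defined unconditionally because tree paths are unique, balance is invoked only once per non-tree edge on the genuine simple cycle formed by that edge and the tree path joining its endpoints, and the cancellation argument $\sgn(P_u)\sgn(P_w) = \sgn(\text{tree path } u\text{--}w)$ is carried out explicitly. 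What you lose is a small amount of brevity; what you gain is a proof that needs balance only on the fundamental cycles of $T$, which dovetails with the paper's later remark that balance of the fundamental cycles suffices for balance of the whole graph (the idea underlying the switching-based algorithms it cites). Your easy direction (an even number of cut crossings forces an even number of negative edges in every cycle) is essentially identical to the paper's.
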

\begin{proof}
  First, assume $\Network$ is balanced. 
	Then select any $v \in \Nodes$ and set $\Nodes_1 = \{u | \sgn(u-v \text{~path}) = 1 \}$, that is, all the nodes that can be reached through a positive path. 
	Define $\Nodes_2 = \Nodes \setminus \Nodes_1$. 
	Let $e = \edge{u}{w} \in \Edges^-$. 
	Suppose $e \in \Nodes_1 \times \Nodes_1$.
	By construction of $\Nodes_1$, then both $u$ and $w$ have a positive path to $v$, so that the path $u-w$ through $v$ is also positive.
	But if $\edge{u}{w}$ is negative, it would be contained in a negative cycle, contradicting balance.
	Hence $e \notin \Nodes_1 \times \Nodes_1$.
	Similarly, suppose that $e \in \Nodes_2 \times \Nodes_2$.
	Then both the $u-v$ path and the $w-v$ path is negative (otherwise $u$ and $w$ would be in $\Nodes_1$).
	The $u-w$ path through $v$ is then positive since the product of the two negative paths is positive. 
  Again, since $\edge{u}{w} \in \Edges^-$ it contradicts balance. 
  Hence, all negative edges lie between $\Nodes_1$ and $\Nodes_2$.
  Finally, let $e = \edge{u}{w} \in \Edges^+$ with $u \in \Nodes_1$ and $w \in \Nodes_2$.
  Then there is a positive $u-v$ path and a negative $w-v$ path, so that the $u-w$ path through $v$ is negative, which combined with the positive edge $\edge{u}{w}$ leads to a negative cycle, contradicting balance.
	Hence, positive edges lie within $\Nodes_1$ and $\Nodes_2$. 
  We conclude that if $\Network$ is balanced, it can be partitioned as stated.
	Vice-versa, suppose $\Network$ can be partitioned into the two states subsets $\Nodes_1$ and $\Nodes_2$. 
	Let $C$ be a cycle. 
	If $C$ is contained within $\Nodes_1$ or $\Nodes_2$ it is completely positive, so that $\sgn(C) = 1$. 
	Suppose $C$ has some node $u \in \Nodes_1$ and $v \in \Nodes_2$. 
	Then any $u-v$ path contains an odd number of negative links, and is hence negative, so that the cycle $C$ is positive.
	Hence, all cycles are balanced, and so $\Network$ is balanced.
\end{proof}

\subsection{Weak structural balance}

Classical structural balance theory predicts that a balanced network can be partitioned into two clusters.
However, as suggested by \citet{Davis1967} and \citet{Cartwright1968}, we can generalize this notion of structural balance by redefining the notion of an unbalanced triad or cycle.
Consider for example the (unbalanced) triad with three negative links.
The three nodes can be partitioned into three clusters: trivially, all links between clusters are negative and all positive links are within clusters.
There is a simple characterization of networks that can be partitioned in such a way: no cycle can contain exactly one negative link.
\citet{Davis1967} established this only for complete graphs, and \citet{Cartwright1968} extended this to sparse graphs.
We call signed networks with this property \emph{weakly} structurally balanced (or weakly balanced).

\begin{definition}
  A cycle $C = v_1v_2\ldots v_kv_1$ is termed weakly balanced if it does not contain exactly a single negative link. 
	A signed graph $\Network$ is called weakly balanced if all its cycles $C$ are weakly balanced.
\end{definition}

Following this, we can call the previous definition \emph{strong} structural balance.
Any graph that is strongly structurally balanced is also weakly structurally balanced: a cycle with positive sign must contain an even number of negative links. 
It cannot have exactly one. 
The reverse does not hold: a weakly structurally balanced cycle can have three negative links which is not allowed in strong structural balance.

\begin{lemma}
  Let $C = v_1v_2\ldots v_kv_1$ be a cycle with a chord between nodes $v_1$ and $v_r$ in $C$. 
	Then let $C_1 = v_1v_2\ldots v_rv_1$ and $C_2 = v_1v_k\ldots v_r v_1$ be the induced subcycles. 
	Then $C$ is weakly balanced if $C_1$ and $C_2$ are weakly balanced. 
  \label{lem:weak_bal:chordless_cycles}
\end{lemma}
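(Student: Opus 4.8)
The plan is to reduce the statement to a counting argument on the number of negative edges, exactly as in the proof of the analogous decomposition theorem for $\sgn(C)$. Let $m^-_1$, $m^-_2$ and $m^-$ denote the number of negative links in $C_1$, $C_2$ and $C$ respectively. By definition, weak balance of a cycle is precisely the statement that its number of negative links is not exactly one. So the goal reduces to showing $m^- \neq 1$ under the hypotheses $m^-_1 \neq 1$ and $m^-_2 \neq 1$. The chord $\edge{v_1}{v_r}$ belongs to both $C_1$ and $C_2$ but not to $C$, so I would split into two cases according to its sign, just as in the earlier chord lemma.

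First I would treat the case where the chord is positive. Then the chord contributes no negative link anywhere, so $m^- = m^-_1 + m^-_2$. I would argue by contradiction: if $m^- = 1$, then one of $m^-_1, m^-_2$ equals $1$ and the other equals $0$, but a value of $1$ contradicts the weak balance of the corresponding subcycle. Hence $m^- \neq 1$ in this case.

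In the remaining case the chord is negative, so removing it from each subcycle gives $m^- = m^-_1 + m^-_2 - 2$. Here lies the one mildly delicate point: a negative chord is itself a negative link of each subcycle, so $m^-_1 \geq 1$ and $m^-_2 \geq 1$; combined with $m^-_i \neq 1$ this upgrades to $m^-_1 \geq 2$ and $m^-_2 \geq 2$, whence $m^- \geq 2 > 1$. Either way $m^- \neq 1$, so $C$ is weakly balanced.

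I do not expect a genuine obstacle: the entire content is bookkeeping of negative edges across the chord, and the only step that must be handled with care is the negative-chord case, where the hypothesis must be used to rule out a subcycle carrying a single negative link. As a sanity check I would note that only the stated implication holds and not its converse—for instance a negative chord with $m^-_1 = m^-_2 = 1$ produces a weakly balanced $C$ with $m^- = 0$ out of two subcycles that are \emph{not} weakly balanced—which explains why the lemma is phrased as a one-directional ``if''.
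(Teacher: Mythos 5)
Your proof is correct and follows essentially the same route as the paper's: split on the sign of the chord, use $m^- = m^-_1 + m^-_2$ in the positive case and $m^- = (m^-_1-1)+(m^-_2-1)$ in the negative case, with the hypothesis $m^-_i \neq 1$ upgrading to $m^-_i \geq 2$ when the chord is negative. Your explicit contradiction step in the positive-chord case and your remark on the failure of the converse just spell out details the paper states more tersely.
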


\begin{proof}
	We denote by $m^-_1 \neq 1$ the number of negative links of $C_1$ and, similarly, $m^-_2 \neq 1$ for $C_2$ and $m^-$ for $C$.
  Suppose that the link $\edge{v_1}{v_r}$ is not a negative link, then the number of negative links in $C$ is $m^- = m^-_1 + m^-_2 \neq 1$ implying $C$ is weakly balanced. 
  Suppose that $\edge{v_1}{v_r}$ is a negative link.
	Then both $m^-_1 \geq 2$ and $m^-_2 \geq 2$, and $m^- = (m^-_1 - 1) + (m^-_2 - 1) \geq 2$ so that $C$ is weakly balanced.
\end{proof}

The inverse does not hold. 
This can readily be seen by considering an all-positive cycle with a single negative chord.
The all-positive cycle, clearly, is weakly balanced, but the induced sub-cycles contain exactly one single negative link, and are therefore not weakly balanced.
The theorem on chordless cycles for weak balance is hence a weaker statement than the corresponding theorem for strong structural balance.
Nonetheless, we can still limit ourselves to considering chordless cycles for determining whether a graph is weakly structurally balanced.
\begin{theorem}
	Let $\Network$ be a signed network. Then $\Network$ is weakly structurally balanced if and only if all chordless cycles are weakly structurally balanced.
\end{theorem}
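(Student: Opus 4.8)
The statement is an ``if and only if,'' and one direction is immediate. If $\Network$ is weakly structurally balanced then, by definition, every cycle is weakly balanced; since the chordless cycles form a subset of all cycles, they too are weakly balanced. All the content therefore lies in the converse, where I must show that weak balance of the (potentially few) chordless cycles propagates to weak balance of \emph{every} cycle.

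My plan for the converse is strong induction on the length of a cycle. The base case is cycles of length three: these are triangles, which are necessarily chordless, and so are weakly balanced by hypothesis. For the inductive step I take an arbitrary cycle $C$ of length $k > 3$ and assume that every cycle of length strictly less than $k$ is weakly balanced. If $C$ is chordless, it is weakly balanced by assumption and nothing more is needed. Otherwise $C$ has a chord joining two of its nodes $v_1$ and $v_r$, and this chord splits $C$ into the two induced subcycles $C_1 = v_1v_2\ldots v_rv_1$ and $C_2 = v_1v_k\ldots v_rv_1$ exactly as in Lemma~\ref{lem:weak_bal:chordless_cycles}.

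The observation that makes the induction close is that both subcycles are strictly shorter than $C$. Because a genuine chord joins nodes that are non-adjacent along $C$, the index $r$ satisfies $3 \le r \le k-1$, so $C_1$ has length $r < k$ and $C_2$ has length $k-r+2 < k$. The induction hypothesis then applies to each of them, giving that $C_1$ and $C_2$ are weakly balanced, and a direct appeal to Lemma~\ref{lem:weak_bal:chordless_cycles} yields that $C$ is weakly balanced. This completes the inductive step and hence the converse. Note that the induction is over \emph{all} cycles ordered by length, not over chordless cycles: the subcycles $C_1$ and $C_2$ may themselves carry chords, but this is harmless since they are covered by the hypothesis purely on account of being shorter.

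I expect the only delicate point to be the bookkeeping in the inductive step, and in particular the verification that a true chord always yields two strictly shorter subcycles; a degenerate ``chord'' joining consecutive cycle nodes would simply reproduce $C$ and stall the recursion, so the bound $3 \le r \le k-1$ is what actually drives the argument. Everything else reduces to invoking the earlier lemma. It is worth flagging that, unlike the strong-balance analogue, Lemma~\ref{lem:weak_bal:chordless_cycles} is only a one-directional implication, so the induction can run only in the direction of assembling the balance of larger cycles from that of smaller ones; this is precisely why the theorem can certify weak balance from chordless cycles but not conversely reduce in the stronger sense.
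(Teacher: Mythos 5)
Your proof is correct and follows essentially the same route as the paper: the forward direction is immediate, and the converse is a strong induction on cycle length with base case the (necessarily chordless) triangles and inductive step via the chord decomposition of Lemma~\ref{lem:weak_bal:chordless_cycles}. Your explicit verification that a genuine chord yields two strictly shorter subcycles ($3 \le r \le k-1$) is a detail the paper leaves implicit, and your closing remark about the one-directional nature of the lemma matches the paper's own discussion.
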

\begin{proof}
	If $\Network$ is weakly balanced, all cycles are balanced, so that trivially all chordless cycles are balanced.
	Vice versa, assume all chordless cycles are weakly balanced.
	We use induction on $|C|$.
  All chordless cycles $C$ are balanced by assumption, providing our inductive base for $|C| = 3$ (because triads are chordless by definition).
	Assume all cycles with $|C| < r$ are balanced, then consider cycle $C$ of length $r$.
	If $C$ contains a chord, we can separate $C$ in cycles $C_1$ and $C_2$, which are balanced by our inductive assumption.
  Then, by Lemma~\ref{lem:weak_bal:chordless_cycles} cycle $C$ is balanced.
	Hence, all cycles are weakly balanced.
\end{proof}
To determine whether a graph is weakly structurally balanced, we need only to consider the chordless cycles rather than all cycles.
Computationally, this is important.

Similar to strong structural balance, we can partition a weakly structurally balanced graph, but now in possibly more than two clusters.
This is called the \textit{second} structure theorem by \citet{Doreian1996a}.
\begin{theorem}[Clusterability theorem, \cite{Cartwright1968}]
  Let $\Network = (\Nodes,\Edges^+,\Edges^-)$ be a connected signed graph.
  Then $\Network$ is weakly structurally balanced if and only if $\Network$ can be partitioned into disjoint subsets $\Nodes_1, \Nodes_2, \ldots, \Nodes_r$ such that a positive edge $e \in \Edges^+$ falls in $\Nodes_c \times \Nodes_c$ while a negative edge $e \in \Edges^-$ falls in $\Nodes_c \times \Nodes_d$ for $c \neq d$.
  \label{thm:clusterability}
\end{theorem}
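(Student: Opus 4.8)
The plan is to follow the same two-directional strategy used for Theorem~\ref{thm:structure}, but to replace the construction based on the \emph{sign} of paths by one based on \emph{positive connectivity}. The reason is that under weak balance the sign of a path between two fixed nodes need no longer be well defined (two paths may disagree), so the set $\{u \mid \sgn(u - v \text{~path}) = 1\}$ is no longer meaningful. Instead I would define the clusters directly as the connected components of the positive subgraph.

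For the direction ``weakly balanced $\Rightarrow$ partition'', first I would form the graph $\Network^+ = (\Nodes, \Edges^+)$ containing only the positive edges, and let $\Nodes_1, \ldots, \Nodes_r$ be its connected components. By construction any positive edge $e = \edge{i}{j} \in \Edges^+$ joins two nodes of the same positive component, hence $e$ falls in some $\Nodes_c \times \Nodes_c$. The substance of the argument is to show that every negative edge falls \emph{between} two distinct components. Suppose, for contradiction, that a negative edge $e = \edge{u}{w} \in \Edges^-$ had both endpoints in the same component $\Nodes_c$. Then there is a path $P$ from $u$ to $w$ consisting entirely of positive edges, and $P$ together with $e$ closes into a cycle $C$ that contains exactly one negative link, namely $e$. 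This contradicts weak balance, which forbids precisely such cycles. Hence no negative edge lies inside a cluster, and the stated partition is obtained.

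For the converse ``partition $\Rightarrow$ weakly balanced'', I would argue directly that no cycle can carry exactly one negative link. Let $C$ be any cycle and suppose it had a single negative edge $e = \edge{u}{w}$; by the assumed partition this edge runs between two distinct clusters, say $u \in \Nodes_c$ and $w \in \Nodes_d$ with $c \neq d$. Removing $e$ from $C$ leaves a path $P$ from $u$ to $w$ all of whose edges are positive and therefore internal to clusters. Traversing $P$ from $u$ we can never leave $\Nodes_c$, so $w \in \Nodes_c$, contradicting $w \in \Nodes_d$. Thus no cycle has exactly one negative link, so $\Network$ is weakly balanced.

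The main obstacle is the forward direction, and specifically the realization that the ``positive path'' construction of Theorem~\ref{thm:structure} must be abandoned in favour of positive connected components; once that reframing is made, the crucial step --- manufacturing a forbidden single-negative-edge cycle from a hypothetical intra-cluster negative edge --- is short. A minor point to keep clean is that connectedness of $\Network$ is not needed for the construction itself (the positive subgraph may well be disconnected, which is exactly why we allow $r > 2$), so the hypothesis merely matches the setting of the earlier theorems.
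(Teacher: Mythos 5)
Your proposal is correct and takes essentially the same route as the paper: both directions use the connected components of the positive subgraph $\Network^+$ as the clusters, with the forward direction closing a hypothetical intra-cluster negative edge into a forbidden single-negative-link cycle, and the converse resting on the observation that an all-positive path cannot leave a cluster. The paper phrases the converse as ``any cycle crossing clusters has at least two negative links'' rather than your direct contradiction, but the underlying argument is identical.
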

\begin{proof}
  Suppose $\Network$ is weakly balanced. 
	Let $\Network^+ = (\Nodes,\Edges^+)$ be the positive part of the signed graph, and let the clusters be defined by the connected components of $\Network^+$. 
	Any positive edge then clearly cannot fall between clusters, because different connected components cannot be connected through a positive link.
	Consider then some negative link $\edge{u}{v} \in \Edges^-$.
	Suppose that $u$ and $v$ are both in some $\Nodes_c$.
	Then there exists a positive $u-v$ path because they are in the same component, thus yielding a cycle with exactly a single negative link, contradicting weak balance.
	Hence, any negative link falls between clusters.
	Vice versa, suppose $\Network$ is split into clusters as stated in the theorem. 
	Any cycle completely contained within a cluster has only positive links.
	Consider a cycle through $u$ and $v$ where $u \in \Nodes_c$ and $v \in \Nodes_d$, $d \neq c$.
	Then any path between $u$ and $v$ must contain at least a single negative link, so that any cycle must contain at least two negative links.
\end{proof}
It is easy to see when a complete signed graph is weakly structurally balanced: it must not contain the $++-$ triad. 

In summary, signed networks which are strongly structurally balanced can be partitioned in two clusters.
Signed networks which are weakly structurally balanced can be partitioned in multiple clusters. 
Clearly, all signed networks which are strongly structurally balanced are also weakly structurally balanced, but not vice versa. 
One obvious question is whether strong or weak structural balance is more realistic. 
This led to partitioning signed networks, which we will examine in the next section.

\section{Partitioning}

\noindent The previous section introduced the general idea and structure theorems for structural balance.
However, these conditions are rather strict: \emph{no} cycle can exist with an odd number of negative links (strong balance), or a single negative link (weak balance).
Empirically, this is rather unrealistic to achieve exactly, but we might come close.
This was suggested by \citet{Cartwright1956}, when introducing the notion of structural balance, who suggested counting the number of cycles that are balanced and measuring the proportion of balanced cycles, termed the \emph{degree of balance}:
\begin{equation}
	b(\Network) = \frac{c^+(\Network)}{c(\Network)}
\label{eq:degree_of_balance}
\end{equation}
where $c^+(\Network)$ is the number of balanced cycles and $c(\Network)$ is the total number of cycles.
This measure is used infrequently, because it is computationally intensive to list all cycles~\cite{Hummon1995}.
The number of cycles in a graph increases exponentially with its size.
Depending on the so-called cyclomatic number, $\mu = m - n + 1$, there are between $\mu$ and $2^\mu$ cycles~\cite{Volkmann1996}, which \citet{Harary1959} also uses to define bounds on the degree of balance.
However, this number provides little insight into the structure of the network.

A more useful measure was suggested by \citet{Harary1959}: the smallest number of  ties to be deleted in order to make the network (weakly) balanced. This is the same as the number of ties whose reversal of signs leads to a balanced network.
This is known as the \emph{line index of imbalance}.
Computing the line index of imbalance is computationally intensive as it is an NP-hard problem.
Initially the definition was restricted to strong structural balance.
\citet{Doreian1996a} were the first to introduce this in the context of clustering for weak structural balance.

\subsection{Strong structural balance}

Given a partition into two subsets, $\Nodes_1$ and $\Nodes_2$, we can measure the number of edges that are in conflict with structural balance.
The number of negative edges within $\Nodes_1$ are
\begin{equation}
	C^-(\Nodes_1) = \frac{1}{2}\sum_{i \in \Nodes_1, j \in \Nodes_1} \Adjacency^-_{ij} \\
\end{equation}
and similarly so for $\Nodes_2$, while the positive edges between $\Nodes_1$ and $\Nodes_2$ are
\begin{equation}
	C^+(\Nodes_1,\Nodes_2) = \sum_{i \in \Nodes_1, j \in \Nodes_2} \Adjacency^+_{ij} \\
\end{equation}
so that the total number of edges inconsistent with structural balance for a partition into $\Nodes_1$ and $\Nodes_2$ is
\begin{equation}
	C(\Nodes_1,\Nodes_2) = C^-(\Nodes_1) + C^-(\Nodes_2) + C^+(\Nodes_1, \Nodes_2).
\end{equation}
This is the \emph{line index of imbalance} mentioned earlier.
A graph $\Network$ is then structurally balanced if and only if the minimum line index of imbalance is zero.

\subsubsection{Spectral theory}

Given a partition into $\Nodes_1$ and $\Nodes_2$, let $x_i = 1$ if $i \in \Nodes_1$ and $x_i = -1$ if $i \in \Nodes_2$. 
Then for an edge $\edge{i}{j}$, if $x_i = x_j$ then $x_i \Adjacency_{ij} x_j = \Adjacency_{ij}$, while for $x_i \neq x_j$ we have $x_i \Adjacency_{ij} x_j = -\Adjacency_{ij}$.
Hence
\begin{align}
	x^\T \Adjacency x =& \sum_{x_i = x_j} (\Adjacency^+_{ij} - \Adjacency^-_{ij}) + \sum_{x_i \neq x_j} (\Adjacency^-_{ij} - \Adjacency^+_{ij}) \\
					 =& 2m - \sum_{x_i = x_j} (\Adjacency^+_{ij} + \Adjacency^-_{ij}) - \sum_{x_i \neq x_j} (\Adjacency^+_{ij} + \Adjacency^-_{ij}) \nonumber \\
            & + \sum_{x_i = x_j} (\Adjacency^+_{ij} - \Adjacency^-_{ij}) + \sum_{x_i \neq x_j} (\Adjacency^-_{ij} - \Adjacency^+_{ij}) \\
					=& 2m - 2\sum_{x_i = x_j} \Adjacency^-_{ij} - 2\sum_{x_j \neq x_j} \Adjacency^+_{ij}
\end{align}
So that $x^\T \Adjacency x = 2(m - C(\Nodes_1, \Nodes_2))$ gives (twice) the number of edges that are consistent with balance, the inverse of the line index of imbalance.
Note that this also implies that if $x_i$ is the partition corresponding to structural balance, than $x_i \Adjacency_{ij} x_j > 0$ for all $i$, $j$.

\begin{theorem}
  \label{thm:spectral}
  Let $\Network$ be a connected signed graph and let $u$ be the dominant eigenvector of the signed adjacency matrix $\Adjacency$. 
  Then $\Network$ is balanced if and only if $\Nodes_1 = \{i \in \Nodes | u_i \geq 0\}$ and $\Nodes_2 = \Nodes\setminus \Nodes_1$ defines the split into two clusters as in Theorem~\ref{thm:structure}.
\end{theorem}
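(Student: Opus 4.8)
The plan is to exploit a diagonal \emph{switching} (gauge) transformation that turns the signed adjacency matrix of a balanced graph into an ordinary nonnegative adjacency matrix, and then invoke the Perron--Frobenius theorem. The reverse implication is immediate: if the sign pattern of $u$ induces a split of the stated form, then by the ``if'' direction of Theorem~\ref{thm:structure} the graph $\Network$ is balanced. Thus essentially all the work lies in the forward direction.

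For the forward direction, suppose $\Network$ is balanced. By Theorem~\ref{thm:structure} there is an indicator vector $x$ with entries $x_i \in \{-1,+1\}$ (say $x_i = +1$ on $\Nodes_1$ and $x_i = -1$ on $\Nodes_2$) for which $x_i \Adjacency_{ij} x_j > 0$ on every edge, as already noted just above the theorem. Setting $D = \diag(x)$, so that $D = D^{-1}$, I would first record the entrywise identity $(D \Adjacency D)_{ij} = x_i \Adjacency_{ij} x_j$, which equals $1$ on every edge and $0$ otherwise. Hence $\bar{\Adjacency} := D \Adjacency D$ is precisely the \emph{unsigned} adjacency matrix of $\Network$: a nonnegative matrix, irreducible because $\Network$ is connected.

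Next I would apply Perron--Frobenius to $\bar{\Adjacency}$: its spectral radius $\lambda_1$ is the largest eigenvalue, is simple, and carries a strictly positive eigenvector $v > 0$. Because $D = D^{-1}$, the matrices $\Adjacency = D \bar{\Adjacency} D$ and $\bar{\Adjacency}$ are similar and hence isospectral, so $\lambda_1$ is also the simple dominant eigenvalue of $\Adjacency$. From $\bar{\Adjacency} v = \lambda_1 v$ one gets $\Adjacency (Dv) = \lambda_1 (Dv)$, so $u := Dv$ is the dominant eigenvector of $\Adjacency$, unique up to scaling. Since $v_i > 0$, we have $\sgn(u_i) = x_i$ for every $i$, whence $\{i : u_i \geq 0\} = \Nodes_1$ and the sign split of the dominant eigenvector reproduces the balanced partition of Theorem~\ref{thm:structure}.

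The main obstacle is pinning down what ``dominant eigenvector'' should mean and ensuring it is well defined. For a signed matrix the eigenvalue of largest magnitude need not be positive, and if $\Network$ is bipartite then $-\lambda_1$ is an eigenvalue of equal magnitude. The resolution is that Perron--Frobenius singles out $+\lambda_1$ as the unique eigenvalue admitting a nonnegative eigenvector and as a \emph{simple} root of the characteristic polynomial; reading ``dominant'' as this Perron root makes $u$ unique up to scale and forces its entries to be nonzero, so that the set $\{i : u_i \geq 0\}$ is unambiguous and genuinely equals $\Nodes_1$. I would make this irreducibility and simplicity step explicit, as it is exactly what rules out vanishing entries and spurious eigenvectors.
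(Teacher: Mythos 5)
Your proof is correct, but it takes a genuinely different route from the paper's. The paper argues variationally: assuming balance, if $u_i \Adjacency_{ij} u_j < 0$ for some pair $i,j$, it builds a competitor $x$ with $|x_i| = |u_i|$ and all products $x_i \Adjacency_{ij} x_j \geq 0$ (possible precisely because the graph is balanced), so that $x^\T \Adjacency x > u^\T \Adjacency u$ while $\|x\| = \|u\|$, contradicting the fact that the dominant eigenvector maximizes the Rayleigh quotient; it concludes $u_i \Adjacency_{ij} u_j \geq 0$ throughout, hence the sign pattern of $u$ yields the balanced split. Your switching-plus-Perron--Frobenius argument instead conjugates $\Adjacency$ by $D = \diag(x)$ into the unsigned adjacency matrix and transports the Perron vector back via $u = Dv$. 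Each approach buys something: the paper's is more elementary (no Perron--Frobenius, no appeal to irreducibility) and dovetails with its surrounding discussion of $x^\T \Adjacency x$ as a balance score; yours delivers strictly more, namely simplicity of the dominant eigenvalue and strict nonvanishing of the entries of $u$, which closes a small gap the paper leaves open --- if some $u_i = 0$, the inequality $u_i \Adjacency_{ij} u_j \geq 0$ holds vacuously at every edge incident to $i$, so the rule ``$u_i \geq 0$ puts $i$ in $\Nodes_1$'' is not actually forced to respect the partition there, whereas your $v > 0$ rules this out. Your explicit disambiguation of ``dominant'' (the Perron root rather than the eigenvalue of largest magnitude, which differ on bipartite graphs) is likewise a point the paper glosses over.
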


\begin{proof}
  If the split defines a correct partition, then obviously $\Network$ is balanced (Theorem~\ref{thm:structure}). 
  In reverse, suppose $\Network$ is balanced. 
  Let $u$ be the dominant eigenvector. 
  Suppose that $u_i \Adjacency_{ij} u_j < 0$ for some $i,j$. 
  Then let $x$ be another vector with $|x_i| = |u_i|$ for all $i$ and $x_i \Adjacency_{ij} x_j \geq 0$ for all $i,j$, which is possible by structural balance of $\Network$. 
  Then $\|x\| = \|u\|$ and
  \begin{align}
    u^\T \Adjacency u &= \sum_{ij} u_i \Adjacency_{ij} u_j \\
             &< \sum_{ij} |u_i \Adjacency_{ij} u_j | \\
             &= \sum_{ij} |x_i \Adjacency_{ij} x_j | \\
             &= \sum_{ij} x_i \Adjacency_{ij} x_j = x^\T \Adjacency x,
  \end{align}
  which contradicts the fact that $u$ is the dominant eigenvector. Hence, $u_i \Adjacency_{ij} u_j \geq 0$ for all $i,j$ and it defines a correct
  partition.
\end{proof}

The vector space constrained to $|x_i| = 1$ is rather difficult to optimize.
Taking general vectors with $\|x\| = 1$, the dominant eigenvector $x$ maximizes this and the largest eigenvalue of the adjacency matrix $\lambda_n(\Adjacency)$ gives a lower bound of the line index of imbalance.

\citet{Kunegis} suggest using the signed Laplacian~\citep{Hou2003} for measuring structural balance.
It is defined as
\begin{equation}
	\L = D - \Adjacency
\label{eq:laplacian}
\end{equation}
where $\Adjacency$ is the signed adjacency matrix as defined earlier and $D = \diag(d_1, \ldots, d_n)$ the diagonal matrix of total degrees.
The rows of $\L$ sum to twice the negative degrees $2 ( d^-_1, \ldots, d^-_n)$ because $\sum_{j} \Adjacency_{ij} = d^+_i - d^-_i$, so that $(d^+_i + d^-_i) - (d^+_i - d^-_i) = 2d^-_i$.
Furthermore, the Laplacian is positive-semidefinite, i.e. $x^\T \L x \geq 0$ for all $x$.
We can show this as follows.
Writing out, we obtain
\begin{align}
   & x^\T \L x \nonumber \\
  =& \sum_{ij} x_i \L_{ij} x_j \\
  =& \sum_{ij} x_i \delta_{ij} d_i x_j - \sum_{ij} x_i \Adjacency_{ij} x_j. \\
\intertext{Since $d_i = \sum_{ij} |\Adjacency_{ij}|$, we can write $\sum_{ij} x_i \delta_{ij} d_i x_j = \sum_{ij} |\Adjacency_{ij}| x_i^2$ and obtain}
	=& \sum_{ij} |\Adjacency_{ij}| x_i^2 - \sum_{ij} x_i \Adjacency_{ij} x_j. \\
\intertext{Clearly $\sum_{ij} |\Adjacency_{ij}| x_i^2 = \sum_{ij} |\Adjacency_{ij}| x_j^2$ so that we get}
	=& \frac{1}{2} \left(\sum_{ij} |\Adjacency_{ij}| x_i^2 + \sum_{ij} |\Adjacency_{ij}| x_j^2 - 2\sum_{ij} x_i \Adjacency_{ij} x_j\right), \\
\intertext{which can be nicely expressed as a square (because $\Adjacency^2_{ij} = |\Adjacency_{ij}|$ and $|\Adjacency_{ij}|^2 = |\Adjacency_{ij}|$)}
	=& \frac{1}{2} \sum_{ij} |\Adjacency_{ij}| (x_i - \Adjacency_{ij} x_j)^2 \geq 0.
\end{align}

Now suppose $\Network$ is strongly balanced so that we can partition  the nodes into $\Nodes_1$ and $\Nodes_2$ without violating balance.
Let $x_i = 1$ if $i \in \Nodes_1$ and $x_i = -1$ if $i \in \Nodes_2$. 
Then for any edge $\edge{i}{j}$, if $x_i = x_j$ then by strong balance $\Adjacency_{ij} = 1$, while if $x_i = -x_j$ we have $\Adjacency_{ij} = -1$.
Hence $|\Adjacency_{ij}|(x_i - \Adjacency_{ij} x_j)^2 = 0$ and the smallest eigenvalue of the Laplacian is $0$.
Vice versa, if the Laplacian is $0$, $\Network$ is balanced: the term $|\Adjacency_{ij}|(x_i - \Adjacency_{ij} x_j)^2 = 0$ can only be $0$ for all $ij$ if $\Adjacency$ is balanced.

More generally, given a partition into $\Nodes_1$ and $\Nodes_2$, let $x_i = 1$ if $i \in \Nodes_1$ and $x_i = -1$ if $i \in \Nodes_2$. 
Then for an edge $\edge{i}{j}$, if $x_i = x_j$ then $|\Adjacency_{ij}|(x_i - \Adjacency_{ij} x_j)^2 = 4A^-_{ij}$ while if $x_i = -x_j$ then $|\Adjacency_{ij}|(x_i - \Adjacency_{ij} x_j)^2 = 4A^+_{ij}$.
Hence,
\begin{align}
	x^\T \L x &= \frac{1}{2} \sum_{ij} |\Adjacency_{ij}|(x_i - \Adjacency_{ij} x_j)^2 \\
           &= \frac{1}{2} \left(\sum_{x_i = x_j} 4A^-_{ij} + \sum_{x_i \neq x_j} 4A^+_{ij}\right) \\
	  &= 2 C(\Nodes_1,\Nodes_2)
\end{align}
and the vector $x$ gives (twice) the line index of imbalance.

The vector space constrained to $|x_i| = 1$ is rather difficult to optimize.
Taking general vectors with $\|x\| = 1$, the minimal eigenvector $x=u$ minimizes $x^\T \L x$.
Consequentially, the smallest eigenvalue of the Laplacian $\lambda_1(\L)$ gives a lower bound of the line index of imbalance, as $x^\T \L x \geq u^\T \L u$ where $u$ is the smallest eigenvector.
The partition induced by $u$ however, taking $x = \sgn(u)$, i.e. $x_i = \sgn(u_i)$, gives an upper bound, as the minimum index of imbalance is at most the index of an actual partition. 
Hence, we obtain
\begin{equation}
	\lambda_1(\L) \leq 2 C(\Nodes_1, \Nodes_2) \leq \tilde{\lambda}_1(\L)
\end{equation}
where $\tilde{\lambda}_1(\L) = \sgn(u)^\T \L \sgn(u)$.

We thus obtain the identity that $x^\T \Adjacency x = 2m - x^\T \L x$ and that maximizing $x^\T \Adjacency x$ is equivalent to minimizing $x^\T \L x$.
However, the eigenvectors of the adjacency matrix and the Laplacian are, in general, not identical.
In the case of balanced graphs though, the largest eigenvector of the adjacency matrix and the smallest eigenvector of the Laplacian provide identical information; the partition into $\Nodes_1$ and $\Nodes_2$.

\subsubsection{Switching}
\label{sec:switching}
One interesting observation in signed graph theory is that we can change the sign of some links without affecting balance.
More precisely, we can switch the signs of edges across a cut without changing structural balance.
Switching signs was introduced originally by \citet{Abelson1958} who used it to calculate the line index of imbalance (although they called it the ``complexity'' of a signed graph).
This was later used by \citet{Zaslavsky1982} in a formal graph-theoretical setting.
More recently, \citet{Iacono2010} use sign switches in an algorithm for calculating the line index of imbalance.

\begin{figure}[t]
  \centering
  \includegraphics[width=0.8\linewidth]{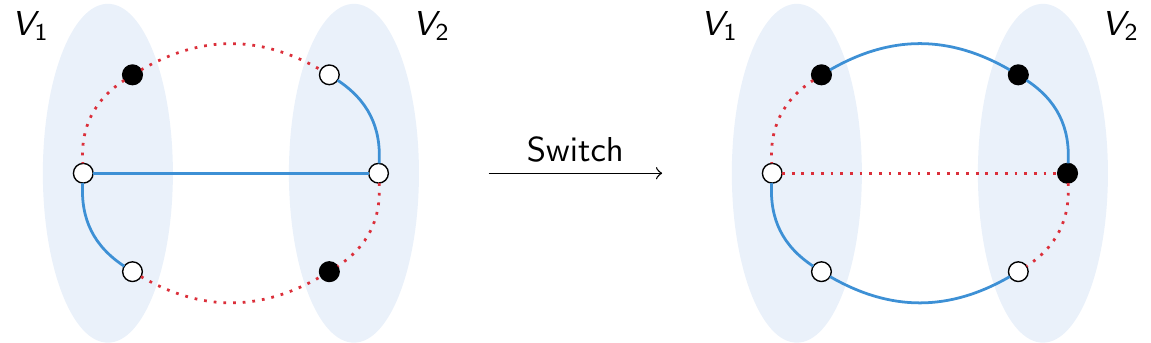}
  \caption{\textbf{Switching}
    This illustrates how switching works. 
    In the graph on the left, there are three edges crossing the partition into $\Nodes_1$ and $\Nodes_2$; two negative (the dashed lines) and one positive (the solid line).
    When we switch according to the partition $\Nodes_1$ and $\Nodes_2$, it implies that we switch the signs of the edges crossing the partition, but leave all the other signs unchanged.
    This is illustrated in the graph on the right.
    All cycles keep the same sign after the switching.
    In this case this reduces the number of negative links, and simplifies finding the balanced partition.
    The balanced partition is indicated by black and white nodes in both cases.
    In the right, the black and white are reversed for $\Nodes_2$, corresponding to the switching of the balanced partition by $\Nodes_1$ and $\Nodes_2$ as explained in Theorem~\ref{thm:switch_partition}.
  }
  \label{fig:switching}
\end{figure}

\begin{definition}[Switching]
	Let $\Network = (\Nodes, \Edges^+, \Edges^-)$ be a signed graph with signed adjacency matrix $\Adjacency$ and let $\Nodes_1$ and $\Nodes_2$ be a partition of $\Nodes$.
	Then let $s_i = 1$ if $i \in \Nodes_1$ and $s_i = -1$ if $i \in \Nodes_2$, with $S = \diag(s)$ and define $\hat{\Adjacency} = S \Adjacency S$ so that $\hat{\Adjacency}_{ij} = s_i \Adjacency_{ij} s_j$.
	Then the graph $\hat{\Network}$ defined by $\hat{\Adjacency}$ is called a switching of $\Network$ defined by the partition $\Nodes_1$ and $\Nodes_2$.
\end{definition}

Hence, for a link $\edge{i}{j}$ with $i \in \Nodes_1$ and $j \in \Nodes_2$, then $\hat{\Adjacency}_{ij} = -\Adjacency_{ij}$, while if both $i, j \in \Nodes_1$ (or $i,j \in \Nodes_2$), $\hat{\Adjacency}_{ij} = \Adjacency_{ij}$.
In other words, switching means we invert the signs of links across the cut by the partition $\Nodes_1$ and $\Nodes_2$, as illustrated in Fig.~\ref{fig:switching}.
Most importantly, any switching preserves the balance of any cycle.

\begin{theorem}
	Let $\Network = (\Nodes, \Edges^+, \Edges^-)$ be a signed graph and let $\hat{\Network}$ be a switched signed graph.
  Denote by $\sgn_\Network(C)$ the sign of some cycle $C$ with respect to $\Network$.Then for any cycle $C$, $\sgn_\Network(C) = \sgn_{\hat{\Network}}(C)$.
\end{theorem}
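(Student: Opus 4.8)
The plan is to compute $\sgn_{\hat{\Network}}(C)$ directly from the definitions and to show that the switching factors cancel in pairs. Writing the cycle as $C = v_1 v_2 \ldots v_k v_1$ with the usual convention $v_{k+1} = v_1$, I would substitute the defining relation $\hat{\Adjacency}_{ij} = s_i \Adjacency_{ij} s_j$ into the product defining the sign of the cycle in $\hat{\Network}$:
\begin{equation}
\sgn_{\hat{\Network}}(C) = \prod_{i=1}^k \hat{\Adjacency}_{v_i v_{i+1}} = \prod_{i=1}^k s_{v_i}\, \Adjacency_{v_i v_{i+1}}\, s_{v_{i+1}}.
\end{equation}

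The key step is then to regroup the product, separating the switching factors from the adjacency factors:
\begin{equation}
\sgn_{\hat{\Network}}(C) = \left( \prod_{i=1}^k s_{v_i} s_{v_{i+1}} \right) \prod_{i=1}^k \Adjacency_{v_i v_{i+1}}.
\end{equation}
Because $C$ is a closed cycle, every vertex $v_j$ contributes its switching value $s_{v_j}$ to exactly two consecutive factors of the first product: once as a left endpoint $s_{v_i}$ with $i = j$, and once as a right endpoint $s_{v_{i+1}}$ with $i = j-1$, where the wrap-around $v_{k+1} = v_1$ guarantees the pairing also closes at the final edge. Hence $\prod_{i=1}^k s_{v_i} s_{v_{i+1}} = \prod_{j=1}^k s_{v_j}^2 = 1$, since $s_{v_j} \in \{+1, -1\}$. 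The surviving factor is precisely $\prod_{i=1}^k \Adjacency_{v_i v_{i+1}} = \sgn_\Network(C)$, which establishes the claim.

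The argument is essentially careful bookkeeping, so I do not anticipate a genuine obstacle; the single point needing attention is verifying that the pairing of switching factors is exhausted correctly at the closing edge of the cycle, which is exactly where the convention $v_{k+1} = v_1$ does its work. An equivalent and arguably cleaner route is to observe that in matrix form $\hat{\Adjacency} = S \Adjacency S$ with $S = \diag(s)$ satisfying $S^2 = I$; the elementary fact $s_{v_j}^2 = 1$ is then the whole content of the proof, regardless of which formulation one adopts.
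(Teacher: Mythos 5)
Your proof is correct, and it takes a genuinely different route from the one in the paper. You argue by direct algebraic cancellation: substituting $\hat{\Adjacency}_{v_i v_{i+1}} = s_{v_i} \Adjacency_{v_i v_{i+1}} s_{v_{i+1}}$ into the product defining $\sgn_{\hat{\Network}}(C)$ and observing that each switching factor $s_{v_j}$ appears exactly twice around the closed cycle, so the extraneous factors collapse to $\prod_j s_{v_j}^2 = 1$. The paper instead argues combinatorially: it splits the cycle's negative edges into those crossing the cut between $\Nodes_1$ and $\Nodes_2$ and those within a part, notes that switching exchanges the roles of positive and negative cut edges while leaving within-part edges untouched, and then invokes the fact that any cycle crosses a cut an even number of times to conclude $(-1)^{m^+_\text{cut}} = (-1)^{m^-_\text{cut}}$, hence the overall sign is unchanged. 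Your telescoping argument is shorter, avoids any case analysis on where edges lie relative to the cut, and extends verbatim to weighted switchings (indeed, as you note, it is just the statement $S^2 = I$ read entrywise along the cycle); the paper's argument buys an explicit combinatorial insight, namely the even-crossing property of cycles with respect to cuts, which is the cut-space/cycle-space orthogonality underlying the later discussion of fundamental cycles. The one point worth being careful about in your version is that the pairing of factors does not actually require the vertices of $C$ to be distinct: every occurrence of a vertex in the cyclic sequence contributes its $s$-value to exactly two adjacent factors, so the cancellation is robust even for closed walks.
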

\begin{proof}
	Let $C$ be a cycle and let $\Nodes_1$ and $\Nodes_2$ be a partition of $\Network$.
	Let $m^\pm_\text{cut}$ be the number of positive/negative links across the cut between $\Nodes_1$ and $\Nodes_2$ in $\Network$ and $m^\pm_\text{within}$ the number of positive/negative links within $\Nodes_1$ or $\Nodes_2$.
	Then $\sgn_\Network(C) = (-1)^{m^-_\text{cut} + m^-_\text{within}}$.
	Hence there are $\hat{m}^-_\text{cut} = m^+_\text{cut}$ and $\hat{m}^+_\text{cut} = m^-_\text{cut}$ across the cut in $\hat{\Network}$, while $\hat{m}^\pm_\text{within} = m^\pm_\text{within}$.
  By definition $m^+_\text{cut} + m^-_\text{cut}$ is even since any cycle must cross $\Nodes_1$ and $\Nodes_2$ an even number of times.
  In other words, $(-1)^{m^+_\text{cut} + m^-_\text{cut}} = 1$ and hence $(-1)^{m^+_\text{cut}} = (-1)^{m^-_\text{cut}}$.
	We thus obtain
	\begin{align}
    \sgn_{\hat{\Network}}(C) &= (-1)^{\hat{m}^-_\text{cut} + \hat{m}^-_\text{within}} \\
                        &= (-1)^{m^+_\text{cut}}(-1)^{m^-_\text{within}} \\
                        &= (-1)^{m^-_\text{cut}}(-1)^{m^-_\text{within}} \\
                        &= \sgn_\Network(C)
	\end{align}
\end{proof}

Recall that the line index of imbalance is the minimum number of signs that would need to be changed to make the graph structurally balanced.
So, if the balance of the cycles does not change, then neither would the minimum number of sign changes required, and hence the line index of imbalance remains the same.

\begin{theorem}
	Let $\sigma_i = \{-1, 1\}$ be a partition of $\Network$ and let $S$ be a switching of $\Network$.
  Then the switched partition $\hat{\sigma} = \sigma S$ has the same imbalance for the switched graph $\hat{\Network}$.
  \label{thm:switch_partition}
\end{theorem}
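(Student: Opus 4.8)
The plan is to reduce everything to the quadratic form $x^\T \Adjacency x$, which was shown earlier to encode the imbalance of a partition. Recall that for a partition with indicator vector $x$ (entries $\pm 1$) on a graph with signed adjacency matrix $\Adjacency$, one has $x^\T \Adjacency x = 2(m - C(\Nodes_1, \Nodes_2))$, so that the line index of imbalance is $C = m - \tfrac{1}{2} x^\T \Adjacency x$. Switching alters only the signs of edges, not the underlying edge set, so the total number of edges $m$ is the same for $\Network$ and $\hat{\Network}$. Hence it suffices to show that the two quadratic forms coincide, namely that $\hat{\sigma}^\T \hat{\Adjacency} \hat{\sigma} = \sigma^\T \Adjacency \sigma$, and the equality of the imbalances follows at once.

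First I would write the two ingredients in matrix form. By the definition of switching, $\hat{\Adjacency} = S \Adjacency S$ with $S = \diag(s)$ and $s_i = \pm 1$. The switched partition $\hat{\sigma} = \sigma S$ is, componentwise, $\hat{\sigma}_i = s_i \sigma_i$, i.e. as a column vector $\hat{\sigma} = S \sigma$. The two properties of $S$ that I would exploit are that it is symmetric, $S^\T = S$ (being diagonal), and that it is an involution, $S^2 = I$, which holds because every diagonal entry satisfies $s_i^2 = 1$.

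Substituting these into the quadratic form and using $S^\T = S$ gives
\begin{equation}
\hat{\sigma}^\T \hat{\Adjacency} \hat{\sigma} = (S\sigma)^\T (S \Adjacency S)(S\sigma) = \sigma^\T (SS)\, \Adjacency\, (SS) \sigma = \sigma^\T \Adjacency \sigma,
\end{equation}
where the flanking factors collapse by $S^2 = I$. Together with the invariance of $m$ this yields that the imbalance of $\hat{\sigma}$ on $\hat{\Network}$ equals that of $\sigma$ on $\Network$, as claimed.

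There is essentially no hard step here: the result is a one-line algebraic identity once the definitions are written in matrix notation. The only points needing a moment's care are (i) reading ``imbalance'' as the line index $C$, hence as an affine function of $x^\T \Adjacency x$, and (ii) noting that the additive offset $2m$ is unaffected because switching preserves the edge set and the degree matrix $D$ (so the same computation runs through the signed Laplacian via $\hat{\L} = S \L S$, since $SDS = D$). With those observations in place, the involution property $S^2 = I$ does all the work.
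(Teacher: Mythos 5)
Your proof is correct and follows essentially the same route as the paper: substitute $\hat{\sigma} = S\sigma$ and $\hat{\Adjacency} = S\Adjacency S$ into the quadratic form and let $S^2 = I$ collapse everything. The only difference is cosmetic — you are slightly more careful in spelling out that the quadratic form determines the imbalance only up to the additive constant $m$, which switching preserves, whereas the paper identifies the imbalance with $\sigma \Adjacency \sigma^\T$ directly.
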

\begin{proof}
	The imbalance the partition $\sigma$ on $\Network$ is $\sigma \Adjacency \sigma^\T$, and for the switched partition and graph we have
	\begin{align}
		\hat{\sigma} \hat{\Adjacency} \hat{\sigma}^\T = \sigma S S \Adjacency S S \sigma^\T = \sigma \Adjacency \sigma^\T
	\end{align}
	because $S S = I$.
\end{proof}

Even if the partition itself is not balanced, switching is defined for any partition.
If $\Network$ is balanced, we can take the balanced partition $\Nodes_1$ and $\Nodes_2$, in which case all the negative links become positive (because they fall between $\Nodes_1$ and $\Nodes_2$), so that we end up with a completely positive graph.
In reverse, the same thing holds: if we can find a switching $S$ such that $S \Adjacency S$ is completely positive, $\Network$ is balanced, and the switching $S$ defines the optimal partition.
See also~\citet{Hou2003}.
The same principle does not hold for weak structural balance.
For example, a triad with three negative links contains a single one after switching so that the original was weakly balanced but the switched one is not.

When \citet{Abelson1958} introduced the idea of switching, they considered a node with the maximal difference of the positive and negative degree: $d_i^+ - d_i^-$.
Switching the signs of all its links would then decrease the total number of negative links, while the balance would remain unchanged.
The final number of negative links then gives an upper bound on the number of negative links that would need to be removed (or switched) in order to yield structural balance.
In other words, it provides an upper bound on the line index of imbalance.

More recently, a rather similar approach was used by \citet{Iacono2010}. 
They follow the same procedure as \citet{Abelson1958} for reducing the number of negative links to arrive at an upper bound for the line index of imbalance.
The optimal solution may contain even fewer negative links.
\citet{Iacono2010} also provide a way to arrive at a lower bound.
The key idea is to associate each negative link to an edge-independent unbalanced cycle, which is easier if the graph contains few negative links.
This procedure relies on the fundamental cycles we briefly encountered earlier. 
Clearly at least one link must change for each edge-independent unbalanced cycle.
Even though some cut set may reduce the number of negative links, no cut set can reduce it more than the number of the unbalanced edge-independent cycles.
Hence, this provides a lower bound on the line index of imbalance.

\subsection{Weak structural balance}

The previous subsection dealt only with a split in two factions.
We can provide similar definitions for a split in multiple factions.
In particular, the number of inconsistencies with structural balance for a given partition into $\Nodes_1, \Nodes_2, \ldots, \Nodes_q$ is 
\begin{equation}
  C = \frac{1}{2}\sum_{\Nodes_i \neq \Nodes_j} C^+(\Nodes_i, \Nodes_j) + \sum_{\Nodes_i} C^-(\Nodes_i).
\end{equation}

Note that if a network contains only positive or only negative links, the minimum line index of imbalance is, by definition, $0$.
For a network of only positive links, the trivial partition consisting of a single cluster provides such a solution.
Similarly, for a network of only negative links, the trivial partition consisting of each node in its own cluster, commonly called the singleton partition achieves zero imbalance.

However, if all links are negative there is an interesting problem: finding the minimum number of factions required for obtaining an imbalance of $0$.
Having $n$ factions (clusters), with each node in its own faction with an imbalance measure of $0$, most often, has little value.
It is reasonable to think that this measure could be achieved with fewer factions.
For example, for a bipartite graph with all negative links, we have to use only $2$ factions.
This minimum number of factions necessary to obtain an imbalance of $0$ is known also as the chromatic number: the minimum number of colors necessary to color each node such that two nodes that are connected have different colors.
This is a much studied area of research in graph theory. It is an NP-complete problem.
This connection was recognized by~\citet{Cartwright1968}.
The similar problem for positive links is oddly enough trivial: the maximum number of communities for which the imbalance is still $0$ simply corresponds to the connected components.

\subsection{Blockmodelling}

The original block model function proposed by~\citet{Doreian1996a} is exactly equivalent to the line index of imbalance.
They also propose a more general form however, weighting differently positive or negative violations of balance:
\begin{equation}
	C = \alpha C^+ + (1 - \alpha)C^-
\end{equation}
where $\alpha = 0.5$ returns (half) the original line index. 
However, this generality comes with costs. 
Without surprise, different values for $\alpha$ return different values of $C$. 
More consequentially, different partitions of the nodes can be returned. This implies there is no principled way for selecting a value of $\alpha$ and hence a partition.
This issue was noted by \citet{Doreian2015}. It can be called `the alpha problem' which amounts to understanding the interplay of the number of positive and negative links in a signed network, the shape of the criterion function, and the role of $\alpha$ in determining partitions. 

The blockmodeling approach partitions the nodes into positions and the links into blocks which are the sets of links between nodes in the positions. 
There is only one type of blockmodel in accordance with structural balance: positive blocks on the main diagonal and negative blocks off the diagonal. 
Of course, for most empirical situations, the links contributing to the line index for imbalance are distributed across blocks. 
To address this, \citet{Doreian2009} examined other possible blockmodels.
They considered two mutually antagonistic camps being mediated by a third group (either internally negative or not).
So, rather than seeking a blockmodel consisting of diagonal positive blocks and off-diagonal negative blocks, they proposed blockmodels with positive and negative blocks appearing anywhere.
For the empirical networks they studied, the results were better fits to the data, according to the line index, and more useful partitions.
Unfortunately this comes at a price: if the number of clusters is left unspecified  {\it a priori}, the best partition is the singleton partition (i.e. each node in its own cluster). 
This line of research is further studied by~\cite{Figueiredo2013,Brusco2011}.

Stochastic block models can also deal with negative links~\cite{Jiang2015}, but we do not discuss them further here.

\subsection{Community detection}

Assuming structural balance holds for a network, the resulting partition is a set of clusters with primarily positive ties within them. 
Structural balance models would not be informative for networks without negative ties. 
Even so, the positively-connected clusters may contain some further sub structure. 
Most networks that contain only positive links can show a clear group structure, commonly called community structure or modular structure, covered in Chapter~3 in this book.
One of the most popular methods for community detection in networks with only positive links is known as modularity.
It is defined as
\begin{equation}
	\Q = \sum_{ij} \left(\Adjacency_{ij} - \frac{d_i d_j}{2m}\right)\delta(\sigma_i, \sigma_j)
\label{eq:modularity}
\end{equation}
where it is assumed that $\Adjacency_{ij}$ only contains positive entries and $\sigma_i$ denotes the community of node $i$ (i.e. if $\sigma_i = c$ it means that node $i$ is in community $c$) and where $\delta(\sigma_i, \sigma_j) = 1$ if $\sigma_i = \sigma_j$ and otherwise $\delta(\sigma_i, \sigma_j) = 0$.
Although this method suffers from a number of problems, most prominently the resolution limit~\cite{Fortunato:2007p183}, it seems to return sensible partitions for graphs with only positive links.

However, modularity suffers from a problem when some of the links are negative \citep{Traag2009,gomez2009}.
In particular, imagine there are two fully connected subgraphs, the first with $n_1 = 5$ nodes and the second with only $n_2 = 2$ nodes while there are $n_1n_2 = 10$ negative links between these two subgraphs.
Using the ordinary definitions, the weighted degree for the first subgraph
would be $d_i = 4 - 2 = 2$ because each node has $4$ links to the other in the subgraph, and $2$ negative links to the other subgraph.
Similarly, the weighted degree for the second subgraph is $d_i = 1 - 5 = -4$,
and the total weight is $m = \binom{5}{2} + \binom{2}{2} - 5\cdot2 = 1$.
Hence, for any link within the first subgraph,
\begin{equation}
  \Adjacency_{ij} - \frac{d_i d_j}{2m} = 1 - \frac{2 \cdot 2}{2} = -1
  \label{eq:clique1}
\end{equation}
and for the second subgraph
\begin{equation}
	\Adjacency_{ij} - \frac{d_i d_j}{2m} = 1 - \frac{(-4) (-4)}{2} = -7
  \label{eq:clique2}
\end{equation}
and for any link in between
\begin{equation}
	\Adjacency_{ij} - \frac{d_i d_j}{2m} = 1 - \frac{(2) (-4)}{2} = 3.
  \label{eq:clique1_2}
\end{equation}
This is rather surprising, as it says that two nodes that are positively connected should be split apart (their contribution is negative), while two negatively connected nodes should be kept together (their contribution is positive).
Of course the correct partition here should be a partition into two communities: all nodes of the first subgraph forms one community and all nodes of the second subgraph forms the other.
However, summing up the contributions in Eq.~(\ref{eq:clique1}) and Eq.~(\ref{eq:clique2}) the quality of such a partition would be 
\begin{equation}
	\frac{5 \cdot 4}{2} \cdot (-1) + \frac{2 \cdot 1}{2} \cdot (-7) = -17
\end{equation}
while if there is only one single large partition, adding the contribution from Eq.~(\ref{eq:clique1_2}), we obtain
\begin{equation}
	-17 + 5 \cdot 2 \cdot 3 = 13.
\end{equation}
In short, modularity cannot be simply applied to signed networks, as the results are inconsistent with the correct partition (cf.~\cite{Doreian2015}).
Hence, modularity needs to be corrected in some way to account for the presence of negative links for it to be useful for signed networks.

Consistent with structural balance, we would expect negative links to be between communities, while the positive links are within communities.
Hence, if we define the quality of the partition on the positive part as
\begin{equation}
	\Q^+ = \sum_{ij} \left(\Adjacency^+_{ij} - \frac{d^+_i d^+_j}{2m}\right)\delta(\sigma_i, \sigma_j)
\end{equation}
and on the negative part as
\begin{equation}
	\Q^- = \sum_{ij} \left(\Adjacency^-_{ij} - \frac{d^-_i d^-_j}{2m}\right)\delta(\sigma_i, \sigma_j)
\end{equation}
then we should like to maximize $\Q^+$ and minimize $\Q^-$.
We can do so by combining $\Q = \Q^+ - \Q^-$ which then becomes
\begin{equation}
	\Q = \sum_{ij} \left[\Adjacency_{ij} - \left( \frac{d_i^+ d_j^+}{2m^+} - \frac{d_i^-d^-_j}{2m^-} \right) \right]\delta(\sigma_i, \sigma_j)
\end{equation}
where $\Adjacency_{ij} = \Adjacency^+_{ij} - \Adjacency^-_{ij}$ as throughout this chapter.
In essence, this comes down to using a null-model that is adapted to signed
networks.
More details can be found in \citet{Traag2013}, Chapter 5.

More generally speaking, one could always define $\Q^+$ for a partition on the positive subnetwork and $\Q^-$ on the negative subnetwork and then define a new quality function as $\Q = \Q^+ - \Q^-$.
For some methods this turns out to give quite nice results, for example for the Constant Potts Model (CPM)~\cite{Traag2011}.
This method was introduced to circumvent any particular form of the resolution limit.
The formulation (again assuming $\Adjacency_{ij}$ is only positive) is simple:
\begin{equation}
	\Q = \sum_{ij} [\Adjacency_{ij} - \gamma]\delta(\sigma_i,\sigma_j).
\end{equation}
Here $\gamma$ plays the role of a resolution parameter, which needs to be chosen in some way.
This parameter has a nice interpretation though, which could motivate a particular parameter setting, and functions as a sort of threshold.
In any optimal partition, the density between any two communities is no higher than $\gamma$, i.e. $e_{cd} \leq \gamma n_c n_d$ where $e_{cd}$ is the number of edges between $c$ and $d$ and $n_c$ and $n_d$ the number of nodes in that community.
Similarly, any community has a density of at least $\gamma$, i.e. $e_{cc} \geq \gamma\binom{n_c}{2}$.
Even stronger, in fact, any subset of a community is connected to the rest of its community with a density of at least $\gamma$ in an optimal partition.

If we extend our previous suggestion of combining the positive and negative parts we arrive at the following:
\begin{align}
	\Q =& \Q^+ - \Q^- \\
	   =& \sum_{ij} [\Adjacency^+_{ij} - \gamma^+]\delta(\sigma_i, \sigma_j) - \nonumber \\
      & \sum_{ij} [\Adjacency^-_{ij} - \gamma^-]\delta(\sigma_i, \sigma_j) \\
		 =& \sum_{ij} [(\Adjacency^+_{ij} - \Adjacency^-_{ij}) - (\gamma^+ - \gamma^-)]\delta(\sigma_i, \sigma_j)
\end{align}
which, by setting $\gamma = \gamma^+ - \gamma^-$, leads to
\begin{equation}
	\Q = \sum_{ij} [\Adjacency_{ij} - \gamma]\delta(\sigma_i, \sigma_j).
\end{equation}
In other words, for CPM, there is no need to treat negative links separately, and we can immediately apply the same method.

Finally, for $\gamma = 0$, CPM is equivalent to optimizing the line index of imbalance.
Indeed, note that we can write the line index of imbalance as
\begin{equation}
  C = \frac{1}{2}\sum_{ij} \left[ \Adjacency^-_{ij} \delta(\sigma_i, \sigma_j) + \Adjacency^+_{ij}(1 - \delta(\sigma_i, \sigma_j)) \right] 
\end{equation}
We can rewrite this as
\begin{align}
	C &= \frac{1}{2} \sum_{ij} \left[\Adjacency^-_{ij} \delta(\sigma_i, \sigma_j) + \Adjacency^+_{ij}(1 - \delta(\sigma_i, \sigma_j))\right] \\
	  &= \frac{1}{2} \sum_{ij} \left[(\Adjacency^-_{ij} - \Adjacency^+_{ij}) \delta(\sigma_i, \sigma_j) + \Adjacency^+_{ij}\right] \\
		&= m^+ - \frac{1}{2}\sum_{ij} \Adjacency_{ij} \delta(\sigma_i, \sigma_j).
		\label{equ:imbalance_eq_comm}
\end{align}
so that $C = m^+ - -\frac{1}{2} \Q$ for the CPM definition of $\Q$.

Given any particular quality function, the problem is always how to find a particular partition that maximizes this quality function.
In general, this problem cannot be solved efficiently (it is NP-hard), and so we have to employ heuristics.
One of the best performing algorithms for optimizing modularity is the so-called Louvain algorithm~\cite{Blondel2008}. It can be adapted for taking into account negative links.
In addition, it can also be adapted for CPM (and other quality functions still).
See \url{https://pypi.python.org/pypi/louvain/} for a \texttt{Python} implementation designed for handling negative links and working with these various methods.

We do not discuss in detail how the algorithm works, but do discuss one particular element that needs to be changed for dealing with negative links.
The basic ingredient of the algorithm is that it moves nodes to the best possible community.
Ordinarily, in community detection, all communities are connected, and hence, the algorithm only needs to consider moving nodes to neighboring communities.
However, this property no longer holds when negative links are present.
A trivial example is a fully connected bipartite graph with all negative links.
In that case, none of the nodes in any community are connected at all.
When only considering neighboring communities, the algorithm never considers moving a node to a community to which it is not connected.
In the end, if the algorithm starts from a singleton partition (i.e. each node in its own community), it will remain there.
So, we need to calculate the change in $\Q$ for all communities, even if it is not connected to that community.
Unfortunately this increases the computational time required for running the algorithm.
Nonetheless, the algorithm is quite fast.
Of course, it only provides a lower bound on the optimal quality value.
Hence, for minimizing the line index of imbalance it only provides an upper bound.

\subsubsection{Temporal community detection}

One concern when studying the evolution of balance is that we also would like to track the partition over time.
For example, if we have two network snapshots and we try to detect the partition minimizing the imbalance, there is an arbitrary assignment to the clusters $-1$ and $1$ (or $0$ and $1$) in the sense that simply relabeling the partition by exchanging the $-1$ and $1$ yields exactly the same imbalance.
For two communities this is still reasonably limited, but for more communities the problem may become more difficult, especially when dealing with many snapshots throughout time.

We rely on a method introduced by~\citet{Mucha2010} to do temporal community detection, while still accounting for negative links.
Because this is not the core issue in this chapter, we discuss it only briefly.
The idea is to create one large network, which contains all the snapshots of the same network.
Then, each node represents a temporal node: a combination of a time snapshot and the original node.
Without any links between the different snapshots, the large network would thus consist of as many connected components as there are snapshots (assuming each snapshot is connected).
Each snapshot is commonly called a slice, and each link within a slice is called an intraslice link.
We introduce additional interslice links, which connects two identical nodes in two consecutive time slices (i.e. they represent the same underlying node, but at a different time) with a certain strength, called the interslice coupling strength.
This requires also some additional changes on the Louvain algorithm.

\section{Empirical analysis}

\noindent Empirical research has shown that while few empirical networks are close to balance, at least  they are much closer than can be expected at random.
Hence, there is considerable evidence that structural balance holds to some extent, at least for weak balance. The evidence for strong structural balance is far more modest with many exceptions present in the literature.
In particular, the all negative triad was found relatively frequently by \citet{Szell2010} contradicting strong structural balance.
They found triads having a single negative link (which is the only triad that is weakly unbalanced) much more rarely.
Overall, their evidence favors weak structural balance over strong structural balance.
Contrary to dynamical models of sign change, they find that links almost never change sign.
However, there is relatively little research into the dynamics of structural balance. Examples where this has been done include \citet{Hummon2003,Doreian2001,Marvel2011}, and \citet{traag_dynamical_2013}.

We here briefly investigate the dynamics of the network of international relations, where structural balance is argued to play a role by \citet{Doreian2015}.
We gathered data from the Correlates of War\footnote{\url{http://www.correlatesofwar.org/}} (CoW) dataset, which collects a variety of information about international relations.
We create a signed network based on their latest data on alliances (v4.1), representing the positive links, and the militarized interstate disputes (MID, v4.1), representing the negative links.
To arrive at a single weight for each link, we sum the different weights on alliances and MIDs for each dyad (a dyad can be involved in multiple alliances and multiple MIDs at the same time).
Each MID generates an undirected (negative) link for all states that are involved on different sides.
For example, if the US and the UK would be in conflict with Egypt and the USSR, then this would generate four negative links: US-Egypt, UK-Egypt, US-USSR and UK-USSR.
The MID weight is set to $\frac{\text{HighestAct}}{21}$ so that the weight is in $[0, 1]$ (see CoW documentation for more details).
Each alliance generates a link for all dyads involved in the alliance.
The weighting is more complicated, since no {\it a priori} weights are assigned.
We chose to weigh a defense pact with a weight of $\frac{10}{14}$, nonaggression by a weight of $\frac{2}{14}$, and both a neutrality and an entente by a weight of $\frac{1}{14}$.
The single weight is then the sum of the alliance weight minus the sum of the MID weight.
Note that a dyad may be involved in multiple MIDs and/or multiple alliances at the same time, so that the individual weight of a link is not necessarily restricted to $[-1, 1]$. 

We find that structural balance does not follow any singular trend, and certainly does not converge to structural balance and remains stable. The same was found by \citet{Doreian2015} and \citet{Vinogradova2014} where an earlier version of the CoW data was used.
We detect communities using CPM with $\gamma=0$ and use the approach by \citet{Iacono2010}, which we abbreviate as IRSA (after the authors).
We ran the Louvain algorithm for CPM both with unlimited number of communities (corresponding to weak structural balance) and also with the number of communities restricted to two (corresponding to strong structural balance).

\begin{figure}[t]
  \centering
  \includegraphics[width=\linewidth]{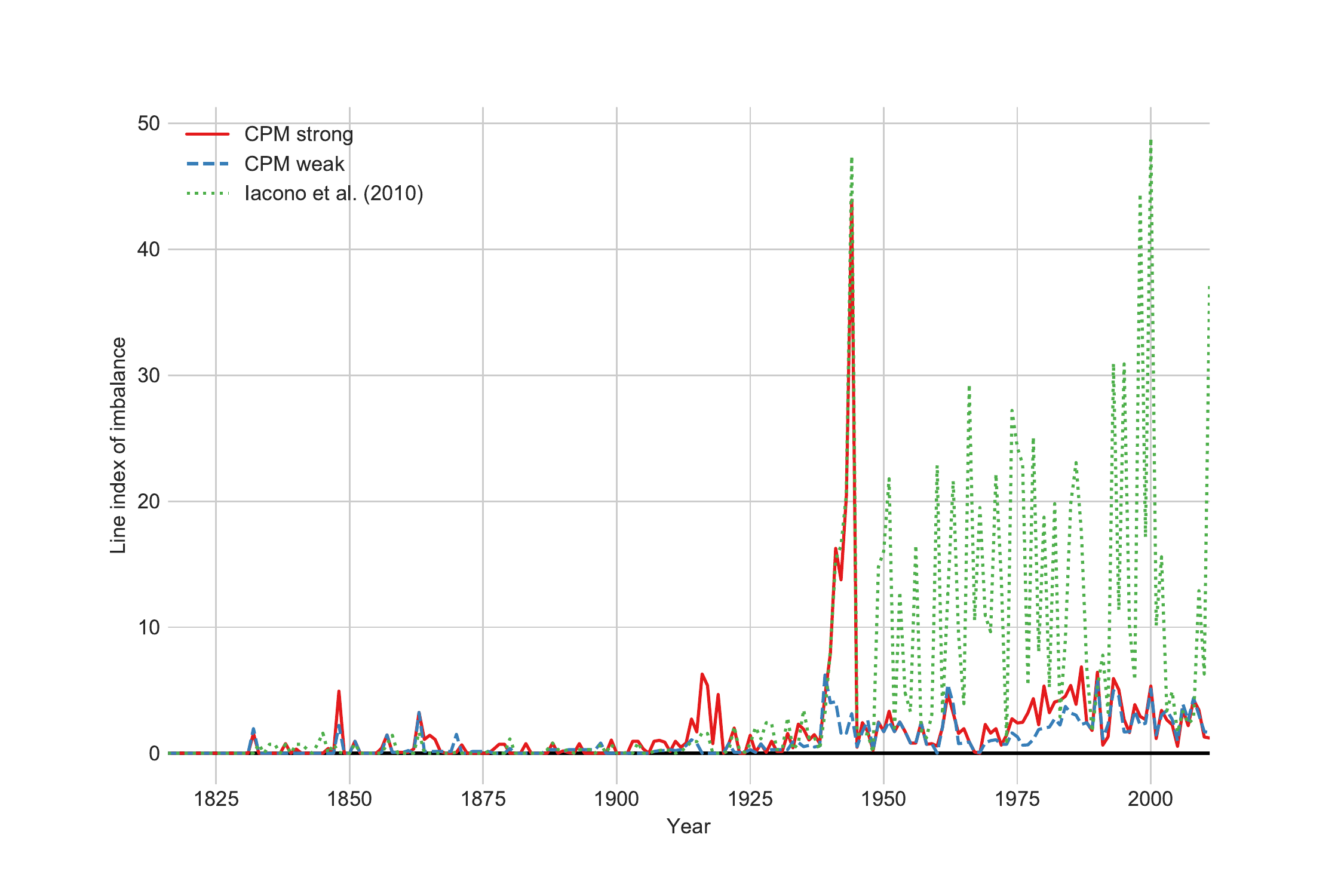}
  \caption{\textbf{Balance timeline}. 
    The line index of imbalance using two different methods. 
    The approach by \citet{Iacono2010} only works for strong balance.
    The CPM approach can be applied both the strong and weak balance.
    CPM seems to provide more stable results than the approach by \citet{Iacono2010}.
  }
  \label{fig:balance_timeline}
\end{figure}

IRSA provides less stable results compared to the CPM estimates (see Fig.~\ref{fig:balance_timeline}).
Perhaps, with more computation time, more accurate results could be achieved.
Even so, regardless of the method, no clear stability emerges.
There are some large peaks of imbalance around WWII, which we discuss.
But during the Cold War, and even after the Cold War, no particular convergence towards 0 imbalance is observed. This is not unreasonable, as the international system is subject to new shocks when new conflicts, some of which are major conflicts, emerge.
Rather than settling at some level of balance,  some unbalance remains in the system which never completely dissipates.
Most often, the difference between strong and weak structural balance usually is not so large.
This implies that a partition of nations into just two factions already explains much of the structure in international relations. At face value,
this suggests that strong balance is at least a reasonable first approximation, and provides some evidence that strong balance is operating in the international system. It is likely that weak balance operates also, perhaps at different timescales.
Nonetheless, there are some clear deviations in the patterns of imbalance.

In particular, both IRSA and CPM find that 1944 shows a large peak with an imbalance of $43.9$ (CPM) or $47.4$ (IRSA), whereas weak structural balance only has an imbalance of $3.14$. For this time point, using weak balance may be more useful. This result is due to the large number of conflicts among various parties, which weak balance can accommodate, but which presents problems for strong balance (see Fig.~\ref{fig:network_1944}).
Indeed, of the 1785 triads in this network, there are 411 strongly unbalanced triads, of which 406 are all-negative triads.
The all-negative triad is considered unbalanced under strong balance, but balanced under weak balance.
This leaves only 5 unbalanced triads under weak balance (although this does not preclude the existence of longer unbalanced cycles).

Many of the all-negative triads are attributable to conflict among nine different countries who were all in conflict with each other: France, Germany, Italy, Hungary, Bulgaria, Romania, Russia, Finland, and New Zealand. 
Many other countries were opposed to at least two others of this large conflict: Japan for example was in conflict with both Russia and New Zealand. These conflicts may be unrelated. But they serve to create an additional unbalanced triad (in the strong sense).
The weakly unbalanced triads involve the UK and Turkey.
The UK was allied with Portugal and Turkey, but Portugal was also allied with Spain (through the alliance between the dictators controlling both countries) which was in conflict with the UK.
Turkey was allied with Germany, Hungary and Iraq in addition to the UK while the UK was in conflict with both Germany and Hungary.
At the same time, Germany was also in conflict with Hungary and Iraq, complicating things further. Clearly, WWII featured many dyadic conflicts, each with their own dynamics.

\begin{figure}[t]
  \centering
  \includegraphics[width=0.8\linewidth]{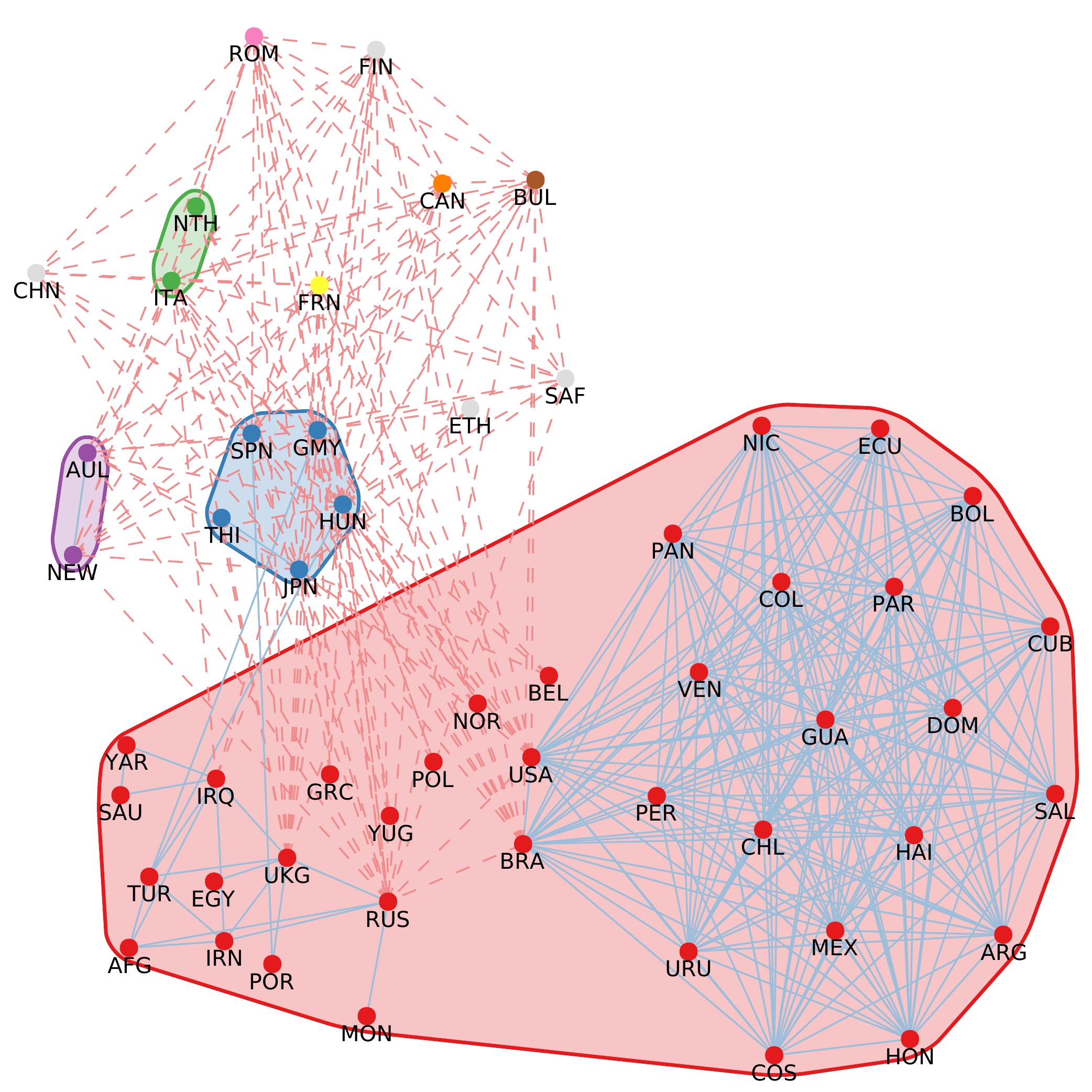}
  \caption{\textbf{International Relations 1944}. 
    The solid lines represent positive links and the dashed lines represent negative links.
    The countries that are clustered together are encircled.
  }
  \label{fig:network_1944}
\end{figure}

There is another interesting observation: weak structural balance is higher than strong structural balance for 1939.
This should not be the case ordinarily, as the minimal imbalance in weak structural balance should always be lower than strong structural balance.
This then seems due to the shift of alliances during WWII.
Since the clustering also favors a certain continuity over time, it may be better to cluster countries in a more stable way, without accounting for short term deviations.
This is what seems to happen in 1939.
In particular, Russia was still allied with Germany and Italy, while Russia is
in conflict with the UK, France and Belgium at that time.
Similarly, Hungary is allied with Turkey, and Spain with Portugal.
Surprisingly, the UK also had some conflict with the USA at that time according to the CoW data.

\begin{figure}[t]
  \centering
  \includegraphics[width=0.8\linewidth]{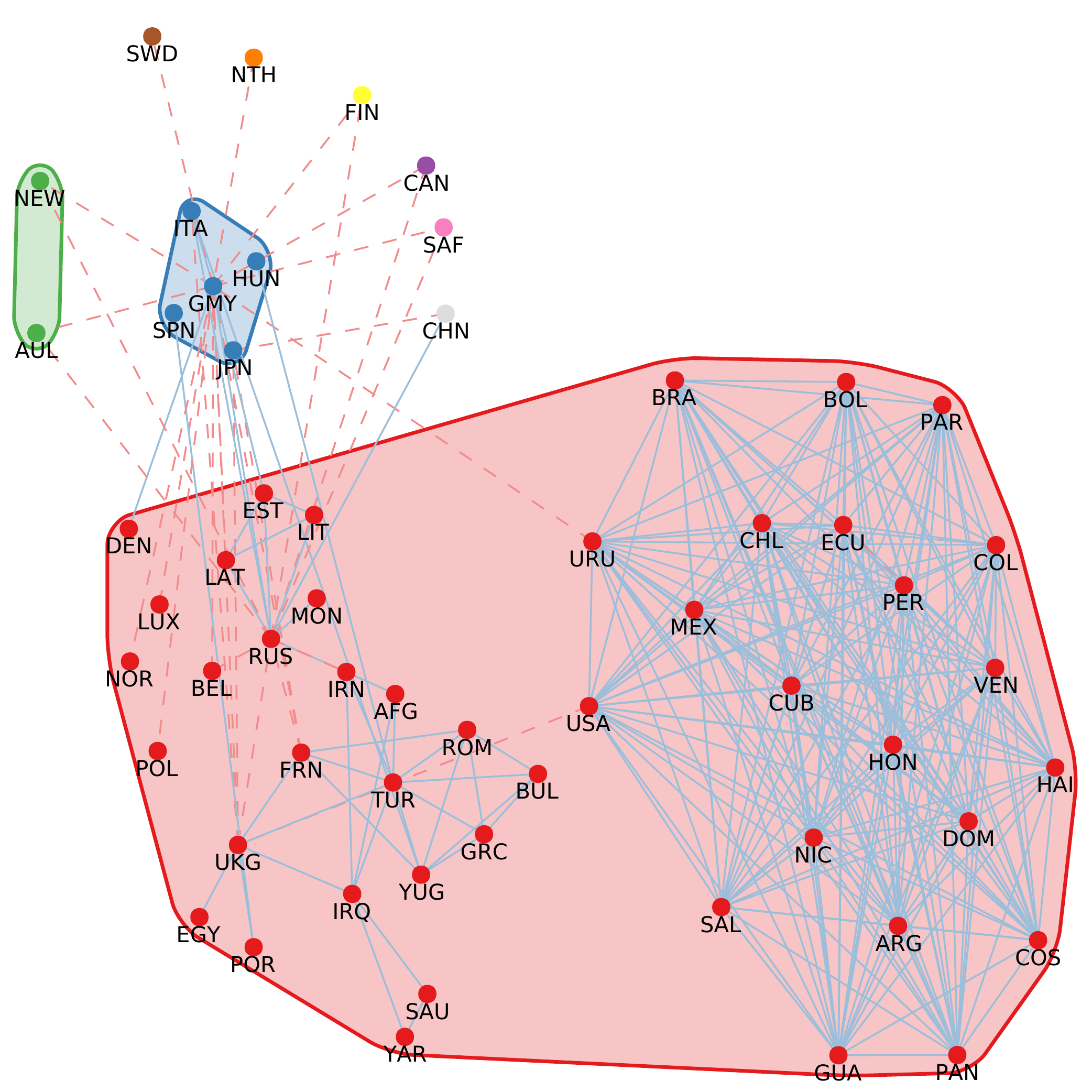}
  \caption{\textbf{International relations 1939}. 
    The solid lines represent positive links and the dashed lines represent negative links.
    The countries that are clustered together are encircled.
  }
  \label{fig:network_1939}
\end{figure}

At the height of the Cold War, we see the familiar division (see Fig.~\ref{fig:map_1962}).
We also see the non-aligned states clustered outside of the familiar division.
Yet some countries are clustered differently than what one would expect.
For example, much of the Arab world is clustered with the West because of the alliance of Morocco and Libya with France. But note that Algeria is not as it was fighting a war of independence with France.
Also, Yugoslavia is commonly seen as non-aligned in the Cold War, but here it is clustered with the West through its alliances with Greece and Turkey. 

Finally, in more recent times the weak balance clustering seems increasingly unrealistic.
This is due to the fact that even if some countries are only weakly positively connected, they are immediately considered as a single cluster. 
In 2010 for example, most of the world is grouped together in a single cluster, except Africa and some exceptions. 
Nonetheless, some clearly separate clusters exist. 
We therefore also detected clusters using CPM with $\gamma=0.1$ for 2010. 
The results are shown in Fig.~\ref{fig:map_year=2010_gamma=0.1}. 
There are clearly different clusters in Africa, something missed completely when partitioning with weak structural balance. 
Africa is divided into a Central African bloc, a Western African bloc, and a Northern African bloc clustered with Arab nations in the Middle East, with the remainder of Africa scattered across other communities. 
The former USSR remains a separate community. 
The so-called West breaks into two communities: North and South America constitute a community whereas Europe becomes a separate community. 

\begin{figure}[t]
  \centering
  \includegraphics[width=\linewidth]{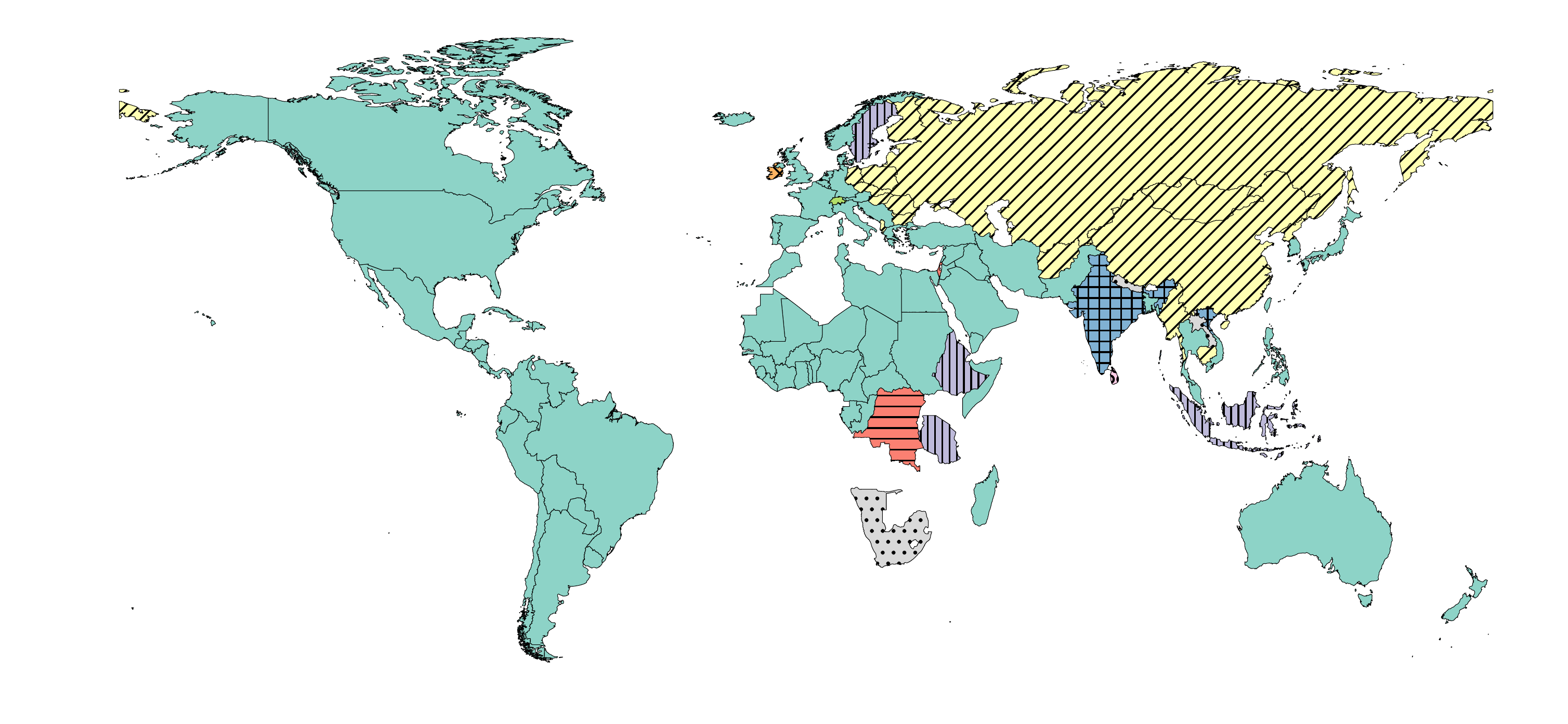}
  \caption{\textbf{Map of weak balance partition in 1962}.
  }
  \label{fig:map_1962}
\end{figure}

\begin{figure}[t]
  \centering
  \includegraphics[width=\linewidth]{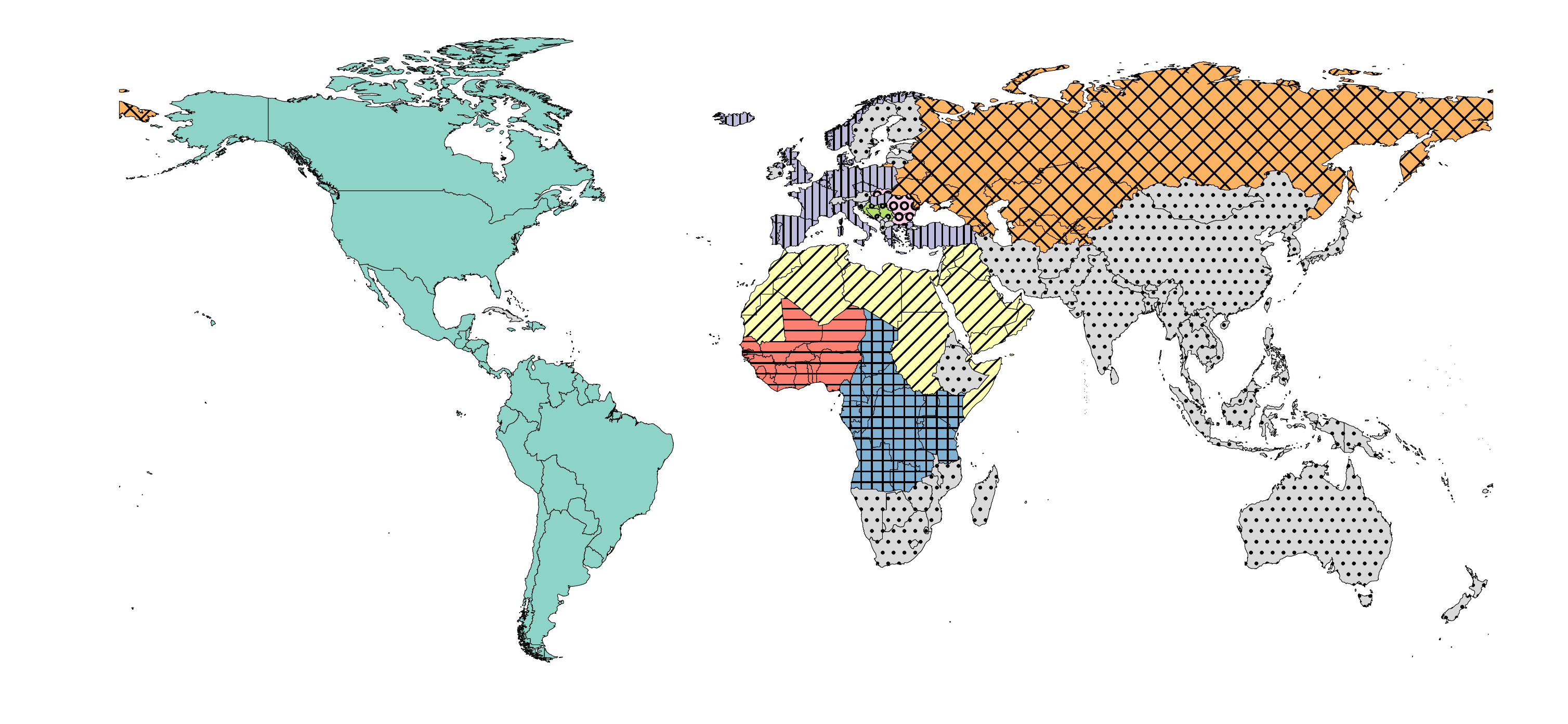}
  \caption{\textbf{Map of CPM partition with $\gamma = 0.1$ in 2010}.}
  \label{fig:map_year=2010_gamma=0.1}
\end{figure}

This is also interesting from another perspective.
Structural balance emphasizes both that negative links ought not exist within clusters and positive links ought not exist between clusters. This seems too restrictive by ignoring the presence of some conflict within clusters along with positive ties between clusters.
Arguably, it makes more sense to allow for a few positive links between clusters without requiring them to be considered immediately a single cluster.
Indeed, when using CPM with $\gamma = 0.1$ relatively less conflict happens within clusters, and most conflict takes place between clusters. 
Nonetheless, strong balance remains a reasonable first approximation. 

\section{Summary and Future Work}

\noindent Partitioning signed networks raises methodological issues that differ from those involved in partitioning unsigned networks. 
Various approaches have been developed. 
We started our discussion with a consideration of structural balance as it provides a substantively driven framework for considering signed networks. 
Formulated in terms of exact balance, the initial results in the literature take the form of existence theorems, which we discussed in some detail. 
We distinguished strong structural balance and weak structural balance. 
Empirically, most signed networks are not exactly balanced. 
One of the underlying assumptions of classical structural balance theory is that signed networks tends towards balance. 
To assess such a claim, it is necessary to have a measure of the extent to which a network is balanced or imbalanced. 
We discussed some of the measures in the literature but focused primarily on the line index of imbalance. 
Obtaining this measure is an NP-hard problem. 
We provided theorems regarding obtaining this measure and its upper and lower bounds.
		
In discussing strong structural balance, we considered spectral theory and presented some results showing how this is another useful approach for obtaining measures of imbalance. 
In doing so, we revisited the concept of switching. 
For partitioning signed networks, we considered signed blockmodeling as a method, pointing out its value and serious limitations. 
We considered community detection and outlined ways in which is can be adapted usefully to partition signed networks. 
In discussing this we considered also the Constant Potts Model (CPM) and how it can be used to partition signed networks. 
We discussed briefly the notion of temporal community detection.

With the formal  results in place, we turned to an empirical example using data from the Correlates of War (CoW) data. 
We applied two methods to obtain partitions for different points in time. 
We made no attempt to assess which is the `best' partitioning method, for they all have strengths and weaknesses. 
However, we did initiate a discussion regarding the conditions under which one method may perform better than others---without being universally the `best' under all conditions.  
This included a discussion of the utility of weak balance and strong balance, the number of clusters and the temporal dynamics of the empirical network we studied.

Our results, consistent with other results for the CoW networks produced by others, is that, temporally, signed networks can move towards balance at some time points and away from balance at others. 
The assumption that signed networks tend towards balance had unfortunate consequences. 
The more important question, substantively, is simple to state: What are the conditions under which these changes take place? 
To some extent, this mirrors the issue of when some methods work better than others. 
The two are related. 
Together, these issues will form a focus for our future work both analytically and substantively.

\bibliography{bibliography}

\end{document}